\PassOptionsToPackage{unicode}{hyperref}
\PassOptionsToPackage{hyphens}{url}
\PassOptionsToPackage{dvipsnames,svgnames,x11names}{xcolor}
\documentclass[
  12pt]{article}

\usepackage{amsmath,amssymb}
\usepackage{iftex}
\ifPDFTeX
  \usepackage[T1]{fontenc}
  \usepackage[utf8]{inputenc}
  \usepackage{textcomp} 
\else 
  \usepackage{unicode-math}
  \defaultfontfeatures{Scale=MatchLowercase}
  \defaultfontfeatures[\rmfamily]{Ligatures=TeX,Scale=1}
\fi
\usepackage{lmodern}
\ifPDFTeX\else  
\fi
\IfFileExists{upquote.sty}{\usepackage{upquote}}{}
\IfFileExists{microtype.sty}{
  \usepackage[]{microtype}
  \UseMicrotypeSet[protrusion]{basicmath} 
}{}
\makeatletter
\@ifundefined{KOMAClassName}{
  \IfFileExists{parskip.sty}{%
    \usepackage{parskip}
  }{
    \setlength{\parindent}{0pt}
    \setlength{\parskip}{6pt plus 2pt minus 1pt}}
}{
  \KOMAoptions{parskip=half}}
\makeatother
\usepackage{xcolor}
\setlength{\emergencystretch}{3em} 
\setcounter{secnumdepth}{5}
\makeatletter
\ifx\paragraph\undefined\else
  \let\oldparagraph\paragraph
  \renewcommand{\paragraph}{
    \@ifstar
      \xxxParagraphStar
      \xxxParagraphNoStar
  }
  \newcommand{\xxxParagraphStar}[1]{\oldparagraph*{#1}\mbox{}}
  \newcommand{\xxxParagraphNoStar}[1]{\oldparagraph{#1}\mbox{}}
\fi
\ifx\subparagraph\undefined\else
  \let\oldsubparagraph\subparagraph
  \renewcommand{\subparagraph}{
    \@ifstar
      \xxxSubParagraphStar
      \xxxSubParagraphNoStar
  }
  \newcommand{\xxxSubParagraphStar}[1]{\oldsubparagraph*{#1}\mbox{}}
  \newcommand{\xxxSubParagraphNoStar}[1]{\oldsubparagraph{#1}\mbox{}}
\fi
\makeatother

\usepackage{longtable,booktabs,array}
\usepackage{calc} 
\usepackage{etoolbox}
\makeatletter
\patchcmd\longtable{\par}{\if@noskipsec\mbox{}\fi\par}{}{}
\makeatother
\IfFileExists{footnotehyper.sty}{\usepackage{footnotehyper}}{\usepackage{footnote}}
\makesavenoteenv{longtable}
\usepackage{graphicx}
\makeatletter
\def\maxwidth{\ifdim\Gin@nat@width>\linewidth\linewidth\else\Gin@nat@width\fi}
\def\maxheight{\ifdim\Gin@nat@height>\textheight\textheight\else\Gin@nat@height\fi}
\makeatother
\setkeys{Gin}{width=\maxwidth,height=\maxheight,keepaspectratio}
\makeatletter
\def\fps@figure{htbp}
\makeatother

\addtolength{\oddsidemargin}{-.5in}%
\addtolength{\evensidemargin}{-.1in}%
\addtolength{\textwidth}{1in}%
\addtolength{\textheight}{1.7in}%
\addtolength{\topmargin}{-1in}
\makeatletter
\@ifpackageloaded{caption}{}{\usepackage{caption}}
\AtBeginDocument{%
\ifdefined\contentsname
  \renewcommand*\contentsname{Table of contents}
\else
  \newcommand\contentsname{Table of contents}
\fi
\ifdefined\listfigurename
  \renewcommand*\listfigurename{List of Figures}
\else
  \newcommand\listfigurename{List of Figures}
\fi
\ifdefined\listtablename
  \renewcommand*\listtablename{List of Tables}
\else
  \newcommand\listtablename{List of Tables}
\fi
\ifdefined\figurename
  \renewcommand*\figurename{Figure}
\else
  \newcommand\figurename{Figure}
\fi
\ifdefined\tablename
  \renewcommand*\tablename{Table}
\else
  \newcommand\tablename{Table}
\fi
}
\@ifpackageloaded{float}{}{\usepackage{float}}
\floatstyle{ruled}
\@ifundefined{c@chapter}{\newfloat{codelisting}{h}{lop}}{\newfloat{codelisting}{h}{lop}[chapter]}
\floatname{codelisting}{Listing}

\makeatother
\makeatletter
\makeatother
\makeatletter
\@ifpackageloaded{caption}{}{\usepackage{caption}}
\@ifpackageloaded{subcaption}{}{\usepackage{subcaption}}
\makeatother

\ifLuaTeX
  \usepackage{selnolig}  
\fi
\usepackage[]{natbib}
\bibliographystyle{agsm}
\usepackage{bookmark}

\IfFileExists{xurl.sty}{\usepackage{xurl}}{} 
\urlstyle{same} 
\hypersetup{
  pdftitle={Differentially Private Bayesian Inference for Gaussian Copula Correlations},
  pdfauthor={Shuo Wang; Joseph Feldman; Jerome P. Reiter},
  pdfkeywords={confidentiality; measurement; multivariate; privacy; probit.},
  colorlinks=true,
  linkcolor={blue},
  filecolor={Maroon},
  citecolor={Blue},
  urlcolor={Blue},
  pdfcreator={LaTeX via pandoc}}

\newcommand{\anon}{1}


\usepackage{amsthm}
\usepackage{amsfonts}
\usepackage{dsfont}
\usepackage{algorithm}
\usepackage{algorithmic}
\usepackage{comment}
\newtheorem{theorem}{Theorem}[section]

\newtheorem{proposition}[theorem]{Proposition}

\theoremstyle{definition}
\newtheorem{definition}[theorem]{Definition}
\theoremstyle{remark}

\newcommand{\cA}{\mathcal{A}}

\newcommand{\norm}[2]{\left\| #1 \right\|_{#2}}
\newcommand{\wt}{\widetilde}

\begin{document}

\def\spacingset#1{\renewcommand{\baselinestretch}%
{#1}\small\normalsize} \spacingset{1}


\if1\anon
{
  \title{\bf Differentially Private Bayesian Inference for Gaussian Copula Correlations}
  \author{Shuo Wang\thanks{
    The authors gratefully acknowledge NSF-SES-2217456.}\hspace{.2cm}\\
    Department of Statistical Science, Duke University\\
    and \\
    Joseph Feldman \\
    Department of Statistics and Data Science, Washington University in Saint Louis\\ 
    and \\ 
    Jerome P. Reiter \\ Department of Statistical Science, Duke University}
  \maketitle
} \fi

\if0\anon
{
  \bigskip
  \bigskip
  \bigskip
  \begin{center}
    {\LARGE\bf Differentially Private Bayesian Inference for Gaussian Copula Correlations}
\end{center}
  \medskip
} \fi

\bigskip
\begin{abstract}
Gaussian copulas are widely used to estimate multivariate distributions and relationships.  We present algorithms for estimating Gaussian copula correlations that ensure differential privacy. We first convert data values into sets of two-way tables of counts above and below marginal medians. We then add noise to these counts to satisfy differential privacy. We utilize the one-to-one correspondence between the true counts and the copula correlation to  estimate a posterior distribution of the copula correlation given the noisy counts, marginalizing over the distribution of the underlying true counts using a composite likelihood. 
We also present an alternative, maximum likelihood approach for point estimation. 
Using simulation studies, we compare these methods to extant methods in the literature for computing differentially private copula correlations.  
\end{abstract}

\noindent%
{\it Keywords:} confidentiality; measurement; multivariate; privacy; probit.
\vfill

\newpage

\newpage
\spacingset{1.8} 

\newpage

\section{Introduction}\label{sec:intro}

The Gaussian copula is used for data analyses across many disciplines including, for example, epidemiology \citep{feldman2025using}, finance \citep{pitt2006efficient}, health \citep{dobra}, marketing \citep{becker2022revisiting},  and psychology \citep{ince2017statistical}. It is also employed as an engine for imputation of missing data \citep[e.g., ][]{hollenbachetal,  zhao2020missing, mdgc,   feldman2025using}  and for generation of synthetic data \citep[e.g., ][]{jeong, li:zhao, benali, feldman2022bayesian}.
Its popularity for applied data analysis seemingly stems from its construction: it allows for arbitrary marginal distributions and captures dependence across variables via a latent multivariate Gaussian dependence structure.  Further, it can be applied as a joint model for mixed-type data, i.e., continuous, ordinal, and categorical  variables.

Often, analysts work with data that are subject to requirements to protect the confidentiality of data subjects and their sensitive attributes.  The literature on data privacy has shown that releasing results of any  statistical analysis leaks information about the underlying data values \citep{dwork2014algorithmic}.  Given enough results from the confidential data, analysts may be able to use released outputs to learn sensitive information \citep{dinur2003revealing, dwork2017exposed, abowd2022topdown}.  Hence, data analysts may seek, or even be required by data stewards, to limit the amount of information leakage.

One way to do so is to ensure released outputs satisfy differential privacy (DP)  \citep{dwork2006differential, calibratingnoise},  which has emerged as a gold-standard definition for privacy protection.  Indeed, there have been several methods for estimating differentially private Gaussian copulas.  \cite{li2014differentially} achieve DP by adding Laplace noise to pairwise correlations.  They present two algorithms, one based on a 
pseudo-likelihood estimator and another based on Kendall's $\tau$. 
\cite{asghar2020differentially} extend this framework using binary coding of categorical variables and discretizing continuous variables. They add
Laplace noise to the resulting indicator counts to reconstruct the copula correlations.
Finally, \cite{Wang2022locally} present a DP copula for a variant of DP called local differential privacy. 

These approaches come with a significant limitation: they do not provide reliable (differentially private) uncertainty quantification for the estimates of the copula correlations. Their algorithms release only noisy summary statistics, from which the correlation matrix is reconstructed. Plugging these privatized statistics into standard variance estimators, such as the sandwich estimator, generally leads to invalid inference, as it fails to account for both the sampling variability and the additional randomness introduced by the privacy mechanism.  Indeed, this limitation is the primary motivation for our work, namely to develop DP copula correlation methods that facilitate principled uncertainty quantification.

The extant methods also have particular features that may affect their usefulness in some applications.  The methods of \cite{li2014differentially} and 
\cite{Wang2022locally} could have 
high sensitivity in contexts where the privacy-mechanism must account for outliers. In turn, this could result in large variances in the Laplace distributions used in the DP mechanisms. The algorithm in \cite{li2014differentially} that uses Kendall's $\tau$ gets around this issue; however, Kendall's $\tau$ may not offer accurate estimates when the data contain ties. There are tie-adjusted variants of Kendall's $\tau$ to 
correct this bias, but using them increases the global sensitivity since each tied observation simultaneously can affect multiple concordant and discordant pairs. The binary-coding-based method of \cite{asghar2020differentially} may not perform well when some categories or discretized bins contain small counts.  In such cases, adding DP noise to the counts can produce negative values, which subsequently need to be clipped to zero.  This clipping can result in undesirable biases.  Their method also assumes access to publicly known quantiles or ranges of variables, which may not be available in some settings.  Finally, their method requires allocating part of the privacy budget to estimating marginals, even if only the dependence structure is of interest.

With these limitations in mind, we present an approach for estimating DP Gaussian copula correlations.  
To do so, we use a Bayesian modeling 
framework that explicitly incorporates the randomness of the DP noise mechanism into the inferential process. 
The resulting posterior distribution can be used 
to make credible intervals for the underlying true copula correlation parameters.
To the best of our knowledge, this is the first work to provide uncertainty quantification for DP copula correlations. In designing the DP algorithm, we 
coarsen the data into two-way tables of counts above or below medians, 
thereby reducing global sensitivity. 
We note that data coarsening \citep{heitjan1991ignorability, miller2019robust} has been used as a tool for efficient computation with Gaussian copulas  \citep{feldman2025using}, although not in the manner we use here.
We also develop a maximum likelihood estimator (MLE) of the copula correlation, which can be useful if one only seeks a point estimator. 
As part of this MLE algorithm, we design
two new DP noise mechanisms that ensure noisy counts fall within pre-specified bounds.
We examine repeated sampling properties of the proposed point estimators and posterior inferences using simulation studies.  
The simulations suggest that the proposed methods can offer lower mean-squared errors than 
existing approaches while also providing
intervals that propagate sources of uncertainty.

The remainder of this article is organized as follows. Section \ref{sec:preliminary} provides brief reviews of DP and Gaussian copulas.  Section \ref{sec:method} presents the Bayesian method for estimating the copula correlation under DP. Section \ref{sec:mechanism} introduces the MLE approach, along with one of the new range-preserving noise mechanisms for count queries; the other range-preserving mechanism is described in the supplementary material. Section \ref{sec:experiment} reports the results of the simulation studies, and Section \ref{sec:genuine} illustrates the methods using data about people's diets. We note that our focus in these empirical evaluations is on the accuracy of estimates of the copula correlations.  We do not consider synthetic data generation, although one could independently estimate the  marginal distributions in a DP manner, e.g., as done by \cite{asghar2020differentially}, to generate synthetic data. Finally, Section \ref{sec:discussion} concludes with a discussion of future research. Codes for all analyses are available at \url{https://github.com/shuowang7878/DPBayesCopula}.

\section{Background}
\label{sec:preliminary}

Section \ref{sec:gc} reviews the Gaussian copula, and Section \ref{sec:dp} reviews differential privacy. 

\subsection{Gaussian Copula}\label{sec:gc}

For $j=1, \dots, p$, let $X_j$ represent one of the study variables of interest.  Let $X = (X_1, \dots, X_p)$. Each $X_j$ has some true marginal cumulative distribution function $F_j$.   For now,  we presume the variables are modeled as continuous; 
see \citet{feldman2024nonparametric} for adaptations of the Gaussian copula model to incorporate categorical variables.
To capture multivariate dependence  in the Gaussian copula, we introduce a latent variable $Z_j$ for each $X_j$ with $\mathbb{E}(Z_j)=0$ and $\operatorname{Var}(Z_j)=1$.  Let  
$Z = (Z_1, \dots, Z_p)$.  We presume these latent variables jointly follow a multivariate normal distribution with correlation matrix $R=\{R_{jj'}\}_{1\leq j<j'\leq p}$, where $R_{jj'}$ is the correlation between $Z_j$ and $Z_{j'}$. 
Putting it all together, we have the Gaussian copula model, specified as 
\begin{equation}
\label{eq:Gaussian_copula}
    Z  \sim \mathcal{N}_p(\mathbf{0},R), 
    \qquad 
     X_j = F_j^{-1}(\Phi(Z_j)) \quad \text{for } j = 1, \dots, p,
\end{equation}
where 
$\Phi(\cdot)$ is the standard normal cumulative distribution function, and each 
 $F_j^{-1}(u)=\inf\{x:F_j(x)\geq u\}$ where $u \in [0,1]$.

This construction decouples the marginal distributions of $X$ from the dependence structure encoded by $R$, allowing the correlation structure to be analyzed independently of marginal features such as skewness, heavy tails, or bounded support.  Typically, each $F_j$ is estimated from the observed data. However, in this article, we forego estimation of the marginal distributions, as 
our DP algorithms are able to estimate $R$ without estimating  $(F_1, \dots, F_p)$.

The $R$ is the Pearson's correlation of $Z$ and hence  measures linear association among the variables after mapping their marginal distributions to standard normal distributions.
For example, a large positive value of $R_{jj'}$ indicates that $X_j$ and $X_{j'}$ tend to move together in terms of their relative positions within their individual marginal distributions; large negative values indicate movement in opposite directions. At the extremal value  $R_{jj'}=1$, $X_j$ is an almost surely non-decreasing function of $X_{j'}$, so that the quantiles of each variable align exactly.
When $R_{jj'}=0$, $Z_j$ and $Z_{j'}$ are independent, indicating no monotonic dependence between $X_j$ and $X_{j'}$. As these examples suggest, $R$ summarizes the monotone associations among the variables in ways that remain meaningful for non-normal data.

Much of the recent research on Gaussian copulas has focused on developing computationally convenient estimation algorithms. For example,  \cite{d2007extending} introduces the extended rank likelihood for Bayesian inference on the Gaussian copula correlation, and \cite{murray2013bayesian} extend this work to high-dimensional settings using factor models. The rank likelihood  enables fully Bayesian inference on the copula dependence structure without having to specify priors on the marginal distribution functions. As such, convenient Gibbs sampling algorithms can be developed for posterior inference. 
As more recent examples, \cite{feldman2022bayesian} introduce the rank-probit likelihood to extend the Gaussian copula estimation to accommodate unordered categorical variables, and \cite{feldman2024nonparametric} use the rank-probit likelihood for estimation of a Bayesian Gaussian mixture copula.

\subsection{Differential Privacy}\label{sec:dp}

Let $D$ represent some dataset comprising $n$ individuals measured on $p$ variables.  DP utilizes the concept of neighboring datasets, which we define as follows.  Let $D'$ be a dataset also with $n$ individuals and $p$ variables. Then, $D'$ is a neighboring dataset of $D$ if it differs from $D$ by the substitution of a single individual's data.  That is, $n-1$ of the individuals are the same in $D$ and $D'$, but one individual is different. This definition of neighboring datasets implies that the sample size $n$ of $D$ is considered public.  One also can define neighboring datasets via the insertion or deletion of one individual from $D$, although we do not do so here. Definition \ref{def:DP} provides the definition of DP that we use in our work.

\begin{definition}[$\epsilon$-differential privacy]
\label{def:DP}
Let $D$ and $D'$ be any neighboring datasets. A randomized algorithm $\cA$ satisfies $\epsilon$-differential privacy ($\epsilon$-DP) if, for every measurable set $S\subseteq \operatorname{Range}(\cA)$,
\begin{equation}\label{eq:DPdef}
    \Pr(\cA(D)\in S)\leq e^\epsilon\Pr(\cA(D')\in S).
\end{equation}\label{eq:DPdef}
\end{definition}
The probabilities in Definition~\ref{def:DP} 
are 
taken with respect to the randomness of the algorithm $\cA$ alone,  not over any sampling distribution for $D$.
 
 The criterion in Definition \ref{def:DP} provides a probabilistic guarantee that changing any single individual's information has a controlled effect on the algorithm's output. The degree of control is governed by the parameter $\epsilon>0$, known as the privacy budget. Smaller values of $\epsilon$ ensure that analysts, well-intentioned or not, cannot easily discern from the output of $\mathcal{A}$ whether $S$ was generated using $D$ or $D'$, thereby making it difficult to learn if any particular individual was in the data.  Larger values of $\epsilon$ offer less of a guarantee.  However, typically there is a trade-off in choosing $\epsilon$.  For most $\mathcal{A}$ that satisfy DP, decreasing $\epsilon$ results in greater distortion of the confidential data analysis.  Typical recommendations in the literature involve setting $\epsilon \leq 1$, although in practice (much) larger values are often used \citep{Kazan2024prior}.

 A common way to construct an $\epsilon$-DP algorithm is to 
 add random noise to the output of the analysis of the confidential data $D$. The scale of the noise is determined by how sensitive the computation is to changes in a single record in the worst case. This quantity is known as the $\ell_1$ sensitivity and is defined in Definition \ref{mechanism:sensitivity}.

\begin{definition}[$\ell_1$-sensitivity]
\label{mechanism:sensitivity}
The $\ell_1$ sensitivity of a function $M$ is 
    $\Delta(M)=\max_{D\sim D'}\norm{M(D)-M(D')}{1}$,
where $D\sim D'$ denotes neighboring datasets.
\end{definition}
For example, when $M(D)$ counts the number of successes out of $n$ trials, $\Delta(M)=1$ since changing one individual at most can increase or decrease the count $M(D)$ by one.  

An example of a DP mechanism for adding integer-valued noise to count data is 
the geometric mechanism \citep{ghosh2012universally}, which we define in Definition \ref{mechanism:geom}.

\begin{definition}[Geometric mechanism]
\label{mechanism:geom}
Let $M$ be a counting query that outputs $M(D)\in\mathbb{N}$ with $\ell_1$-sensitivity $\Delta$. The geometric mechanism outputs $\wt{M}_{\text{Geom}}(D)=M(D)+\delta$, where $\delta\in\mathbb{Z}$ is a draw from the double-geometric distribution,
\begin{equation}\label{eq:geom}
    \Pr(\delta = k) = 
    \left(\frac{1 - e^{-\epsilon / \Delta}}{1 + e^{-\epsilon / \Delta}}\right) 
    e^{-\epsilon|k|/\Delta}, 
    \quad \text{for } k \in \mathbb{Z}.
\end{equation}
\end{definition}

It is possible for $\wt{M}_{\text{Geom}}(D) <0$  or  $\wt{M}_{\text{Geom}}(D) > n$,  
which are incompatible with count queries for fixed $n$.  This can be problematic for estimation methods that depend on counts being in the feasible region. In  Section~\ref{sec:mechanism}, we discuss  post-processing techniques  that ensure the ultimately-used noisy counts are in the feasible region.  We note that the Bayesian inference methods of Section \ref{sec:method} do not require enforcement of constraints on $\wt{M}_{\text{Geom}}(D)$.

DP has two useful properties that we leverage in developing algorithms, stated here as Proposition \ref{prop:seq} and Proposition \ref{prop:postprocess}.  See \citet{dwork2014algorithmic} for proofs of these propositions.
\begin{proposition}[Sequential composition]\label{prop:seq}
   For $k=1,\ldots,m$, let $\mathcal{A}_k$ be an $\epsilon_k$-DP algorithm. Then the joint algorithm $(\mathcal{A}_1, \ldots, \mathcal{A}_m)$, which on input $D$ releases the tuple $(\mathcal{A}_1(D), \ldots, \mathcal{A}_m(D))$, is $\sum_{k=1}^m \epsilon_k$-DP.
\end{proposition}
\begin{proposition}[Post-processing property]
\label{prop:postprocess}
If $h$ is any randomized mapping independent of $D$, and $\cA$ is an $\epsilon$-DP algorithm, then their composition $h \circ \cA$ is also $\epsilon$-DP.
\end{proposition}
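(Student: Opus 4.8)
The plan is to prove the post-processing property in two stages: first for deterministic maps $h$, then bootstrap to randomized maps by conditioning on the internal randomness of $h$. Throughout, fix an arbitrary pair of neighboring datasets $D\sim D'$ and a measurable target set $S'\subseteq\operatorname{Range}(h\circ\cA)$; the goal is to establish $\Pr(h(\cA(D))\in S')\leq e^\epsilon\Pr(h(\cA(D'))\in S')$, which is exactly Definition \ref{def:DP} for the composed algorithm.

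For the deterministic case, I would set $S=h^{-1}(S')=\{r\in\operatorname{Range}(\cA): h(r)\in S'\}$. Since $h$ is a measurable map and $S'$ is measurable, $S$ is a measurable subset of $\operatorname{Range}(\cA)$, so the $\epsilon$-DP guarantee for $\cA$ applies to it. Then the chain $\Pr(h(\cA(D))\in S')=\Pr(\cA(D)\in S)\leq e^\epsilon\Pr(\cA(D')\in S)=e^\epsilon\Pr(h(\cA(D'))\in S')$ gives the claim, with the middle inequality being Definition \ref{def:DP} applied to $\cA$ and the set $S$.

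To handle a genuinely randomized $h$, I would represent it as $h(r)=g(r,W)$, where $W$ is a random seed drawn from some distribution $\mu$, independent of $D$ and of the internal randomness of $\cA$, and for each fixed value $w$ the map $r\mapsto g(r,w)$ is deterministic and measurable. Conditioning on $W=w$, the deterministic case gives $\Pr(g(\cA(D),w)\in S')\leq e^\epsilon\Pr(g(\cA(D'),w)\in S')$ for $\mu$-almost every $w$. Integrating both sides against $\mu$ and using the independence of $W$ from $\cA$ (so that the conditional probabilities integrate back to the unconditional ones) yields the desired bound $\Pr(h(\cA(D))\in S')\leq e^\epsilon\Pr(h(\cA(D'))\in S')$.

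The mathematical content here is light; the step that needs the most care is the measure-theoretic setup — specifically, justifying the seed representation $h(\cdot)=g(\cdot,W)$ and the application of Fubini/Tonelli to interchange the integration over $W$ with the probability over the output of $\cA$. This is routine when $\operatorname{Range}(\cA)$ is a standard Borel space (e.g., a subset of $\mathbb{R}^d$, as in all of our applications), so I would either invoke that regularity explicitly or restrict attention to discrete or Euclidean output spaces, where the interchange reduces to a sum or an elementary integral and no subtlety remains.
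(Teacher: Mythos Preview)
Your argument is correct and is precisely the standard proof; the paper itself does not supply a proof but simply cites \citet{dwork2014algorithmic}, and your two-stage argument (pull back the event through $h^{-1}$ in the deterministic case, then average over the seed $W$ in the randomized case) is exactly the proof given there. Nothing to add.
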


\section{Bayesian Estimation of Copula Correlations}
\label{sec:method}

In this section, we present the Bayesian de-noising approach to estimating the copula correlation. In Section \ref{sec:statistic}, we describe our strategy for turning the values in $D$ into a series of two-way tables of counts above and below the median of each variable.  In Section \ref{sec:estimation}, we describe how to make DP versions of those counts.  In Section \ref{sec:noiseaware}, we present a Bayesian post-processing procedure for estimation.  We begin the presentation with methods for estimating single $R_{jj'}$ and then discuss how to extend to multivariate $R$.  

For $i=1, \dots, n$ and $j=1, \dots, p$, let $x_{ij}$ be the value of $X_j$ for individual $i$.  Let $D = \{x_{ij}: 1 \le i \le n, 1 \le j \le p\}$ denote the observed data.
We presume $D$ follows the Gaussian copula model in \eqref{eq:Gaussian_copula}.

\subsection{Characterizing the Copula Correlation with Two-Way Tables}
\label{sec:statistic}
We first define a statistic $t_{jj'}$ that counts the number of observations that exceed  the sample medians of both $X_j$ and $X_{j'}$.  For any pair of variables $1 \leq j<j'\leq p$, we define 
\begin{equation}
\label{eq:statistic}
    t_{jj'} = 
    \sum_{i=1}^n \mathbb{I}\left(x_{ij}\geq\operatorname{med}(X_j), ~ x_{ij'}\geq \operatorname{med}(X_{j'})\right),
\end{equation}
where $\mathbb{I}(\cdot)$ denotes the indicator function and $\operatorname{med}(X_j)$ denotes the sample median of $\{x_{ij}: i = 1, \dots, n\}$.  
We first consider $D$ such that, for any $j=1, \dots, p$ and for $i=1, \dots, n$, at most one $x_{ij} = \operatorname{med}(X_j)$ at the median. We discuss how to handle ties in Section \ref{sec:estimation}.

We use these counts for the DP algorithms because they offer a substantial reduction in sensitivity; see Section \ref{sec:estimation}.  
Furthermore, 
$t_{jj'}$ is invariant to monotone transformations, thereby ensuring robustness to outliers and providing protection against attacks targeted at the medians. 
Finally, although $t_{jj'}$ is a coarse summary of the observed data, it provides sufficient information to infer each pairwise correlation coefficient $R_{jj'}$.

Let $T_{jj'}$ be the random variable corresponding to the process that generates $t_{jj'}$. Under the Gaussian copula, $T_{jj'}$ admits a known distribution that can be parameterized in terms of $R_{jj'}$. This distribution does not depend on  $F_{j}$ and $F_{j'}$, allowing us to  target inference for $R_{jj'}$ without estimating models for the marginal distributions.

We now characterize the distribution  of $T_{jj'}\mid R_{jj'}$.   For notational simplicity, we assume $n$ is even so that exactly 
$n/2$ observations lie above and below the median.  When $n$ is odd, we allow $(n+1)/2$ observations to lie above the median.  For $i=1, \dots, n$, the monotone marginal transformation between $X$ and $Z$ defined in \eqref{eq:Gaussian_copula} implies that 
\begin{eqnarray}
\sum_{i=1}^n \mathbb{I}\left(
x_{ij}\geq\operatorname{med}(X_j), ~ x_{ij'}\geq \operatorname{med}(X_{j'})
\right)
&=&
\sum_{i=1}^n \mathbb{I}\left(
z_{ij}\geq\operatorname{med}(Z_j), ~ z_{ij'}\geq \operatorname{med}(Z_{j'})
\right)\\
&\approx&
\sum_{i=1}^n \mathbb{I}\left(
z_{ij}\geq0, ~ z_{ij'}\geq0
\right).
\end{eqnarray}
The approximation holds because the sample median $\operatorname{med}(Z_j)$ converges to its population median 0 as $n\rightarrow\infty$. This relationship holds for arbitrary $F_{j}$ and $F_{j'}$.

The approximation induces a $2\times 2$ contingency table in the latent space by dichotomizing $Z_j$ and $Z_{j'}$ using indicators for above and below zero, with fixed row and column totals equal to $n/2$. The corresponding cell probabilities $p_{uv}$ where $u,v\in\{0,1\}$, can be expressed in terms of $R_{jj'}$ as
\begin{align}{}
p_{11}(R_{jj'})
    &= \Pr(Z_j \ge 0, Z_{j'} \ge 0)
     = \frac{1}{4} + \frac{1}{2\pi}\arcsin(R_{jj'})
     = p_{00}(R_{jj'})\\
p_{01}(R_{jj'})
    &= \Pr(Z_j < 0, Z_{j'} \ge 0)
     = \frac{1}{4} - \frac{1}{2\pi}\arcsin(R_{jj'})
     = p_{10}(R_{jj'}).
\end{align}

The expression for $p_{11}(R_{jj'})$ is the orthant probability of the standard bivariate normal distribution, originally derived by \citet{sheppard1899}. The derivation is provided in Section~S.5 of the supplementary material.

These cell probabilities uniquely identify each pairwise copula correlation coefficient, since the latent Gaussian assumption provides a one-to-one map between quadrant probabilities in $\mathbb{R}^{2}$, i.e., $p_{uv}$, and $R_{jj'}$. Thus, each $R_{jj'}$ is identified by its corresponding $t_{jj'}$, which is key for consistent estimation through likelihood-based procedures  \citep{feldman2025using}.

In this case, conditional on the fixed row and column totals,
$T_{jj'}$ follows a noncentral hypergeometric distribution with odds ratio defined by the cell probabilities, 
\begin{equation}
    \frac{p_{11}(R_{jj'})p_{00}(R_{jj'})}{p_{10}(R_{jj'})p_{01}(R_{jj'})}
    =\left(\frac{\pi+2\arcsin(R_{jj'})}{\pi-2\arcsin(R_{jj'})}\right)^2.
\end{equation}
Formally, for any possible realized count $t$, we have
\begin{equation}
\label{eq:PMF_T}
    \Pr(T_{jj'}=t\mid R_{jj'}) = 
    \frac{1}{C_{jj'}} 
    \binom{n/2}{t}^2  
    \left( \frac{\pi + 2 \arcsin(R_{jj'})}{\pi - 2 \arcsin(R_{jj'})} \right)^{2t},
\end{equation}
where the normalizing constant 
\begin{equation}
C_{jj'} = \sum_{k=0}^{n/2} \binom{n/2}{k}^2  \left( \frac{\pi + 2 \arcsin(R_{jj'})}{\pi - 2 \arcsin(R_{jj'})} \right)^{2k}.
\end{equation}
The distribution of $T_{jj'}$ thus belongs to one-parameter exponential family with natural parameter $\eta_{jj'}=2\log((\pi + 2 \arcsin(R_{jj'}))/(\pi - 2 \arcsin(R_{jj'})))$.

\subsection{Ensuring Differential Privacy}
\label{sec:estimation}

To ensure differential privacy, we first perturb $t_{jj'}$
via the geometric mechanism. 
We subsequently use the resulting noisy count to privately estimate $R_{jj'}$. The noise scale is determined by the  $\ell_1$-sensitivity  defined in Theorem \ref{thm:sensitivity}.
\begin{theorem}
\label{thm:sensitivity}
    Let $D=\{x_{ij}\}\in\mathbb{R}^{n\times p}$  have no ties at the medians; that is, for any $j=1, \dots, p$, the 
    $\sum_{i=1}^n \mathbb{I}\left(x_{ij}=\operatorname{med}(X_j)\right) \leq 1$. Consider any neighboring dataset $D'=\{x'_{ij}\}\in\mathbb{R}^{n\times p}$ that has no ties at the medians. Then, for any $1\leq j<j'\leq p$, the $\ell_1$-sensitivity of $t_{jj'}$ is $\Delta(t_{jj'})=1$.
\end{theorem}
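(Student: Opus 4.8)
The plan is to prove the bound $\abs{t_{jj'}(D)-t_{jj'}(D')}\le 1$ for every neighboring pair $D\sim D'$ satisfying the no-ties hypothesis, and then observe that the value $1$ is attained. The subtlety is that $t_{jj'}$ depends on the sample medians $\operatorname{med}(X_j)$ and $\operatorname{med}(X_{j'})$, which themselves shift when one record is replaced, so the naive ``changing one record changes one indicator'' argument does not apply directly; the entire difficulty is to control how far a sample median can move.

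First I would pass to sign variables that isolate the data-dependence of $t_{jj'}$. For each $j$ set $u_{ij}=1$ if $x_{ij}\ge\operatorname{med}(X_j)$ and $u_{ij}=-1$ otherwise, and write $u_j=(u_{1j},\dots,u_{nj})$. Under the no-ties hypothesis (for both $D$ and $D'$), the number of observations weakly above the median is a fixed constant --- $n/2$ when $n$ is even, $(n+1)/2$ when $n$ is odd --- so $\sum_i u_{ij}$ does not depend on the dataset. Expanding $\mathbb{I}(x_{ij}\ge\operatorname{med}(X_j))=(1+u_{ij})/2$ then gives $4\,t_{jj'}=n+\langle u_j,u_{j'}\rangle$ (plus a harmless additive constant when $n$ is odd), so it suffices to show $\abs{\langle u'_j,u'_{j'}\rangle-\langle u_j,u_{j'}\rangle}\le 4$, where primes denote quantities computed on $D'$.

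The engine of the argument is a structural lemma: if $i_0$ denotes the individual whose record differs between $D$ and $D'$, then for each variable $j$ there are $\sigma_j\in\{-1,0,1\}$ and an index $c_j\neq i_0$ with $u'_j=u_j+2\sigma_j(e_{i_0}-e_{c_j})$, where $e_i$ is the $i$-th standard basis vector; equivalently, the above-median set changes by at most a transposition, and $i_0$ is one of the two records involved. To prove it: for $i\neq i_0$ the value $x_{ij}$ is unchanged, so $u_{ij}$ can flip only if $\operatorname{med}(X_j)$ crosses $x_{ij}$ in passing from $D$ to $D'$; ordering the records by $x_{ij}$ and tracking, for each $i$, the number of strictly larger entries, a flip requires this count to hit a single threshold value (determined by whether $i_0$'s value increases or decreases), and since the count is strictly monotone in $x_{ij}$ at most one $i\neq i_0$ can flip. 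The identity $\sum_i u'_{ij}=\sum_i u_{ij}$ then rules out a lone flip and forces $i_0$ and this unique partner $c_j$, when it exists, to flip in opposite directions --- exactly the stated form, with $\sigma_j\in\{0,-u_{i_0j}\}$. I expect this lemma, concretely the claim that a shifted sample median can cross at most one record other than $i_0$, to be the main obstacle, since a priori a change in a median could re-sort arbitrarily many observations; the resolution is that a single substitution moves each relevant order statistic by at most one step.

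Given the lemma, I would substitute $u'_j=u_j+2\sigma_j(e_{i_0}-e_{c_j})$ and its analogue for $j'$ into $\langle u'_j,u'_{j'}\rangle-\langle u_j,u_{j'}\rangle$ and expand. Writing the result in terms of $u_{i_0j}$, $u_{i_0j'}$ and the partners' signs, and using $\sigma_j\in\{0,-u_{i_0j}\}$ together with the relation $u_{c_jj}=-u_{i_0j}$ (forced, in the single remaining case $c_j=c_{j'}$, because $c_j$ flips opposite to $i_0$ in variable $j$), all cross terms cancel and the difference collapses to an element of $\{-4,0,4\}$; hence $\abs{t_{jj'}(D)-t_{jj'}(D')}\le 1$ and $\Delta(t_{jj'})\le 1$. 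For the matching lower bound it suffices to exhibit one neighboring pair with no ties at the medians for which $\abs{t_{jj'}(D)-t_{jj'}(D')}=1$, which is straightforward already for $n=4$, giving $\Delta(t_{jj'})=1$.
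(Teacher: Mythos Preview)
Your proposal is correct and follows essentially the same route as the paper: both establish the key structural fact that, in each column, the above-median indicator can flip for at most the substituted record $i_0$ and one adjacent ``partner'' index (your $c_j$, the paper's $u_j$), invoke the fixed column-sum constraint to force these flips to be opposite, and then expand the bilinear difference and do a short case analysis. The only cosmetic difference is that you work with $\pm 1$ sign variables and the inner-product identity $4t_{jj'}=n+\langle u_j,u_{j'}\rangle$, which packages the transposition lemma and the final case split a bit more cleanly than the paper's $0/1$-indicator computation, but the underlying argument is the same.
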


\begin{proof}
    For $j=1,\dots,p$, define the unit-level indicators 
$a_{ij}=\mathbb{I}\left(x_{ij}\geq\operatorname{med}(X_j)\right)$ and $a'_{ij}=\mathbb{I}\left(x'_{ij}\geq\operatorname{med}(X'_j)\right).$
    Let $l$ be the index of the individual on which $D$ and $D'$ differ. Since there are no ties at the medians, changing the data for individual $l$ can affect the membership of at most two elements in each set $\{i:x_{ij}\geq\operatorname{med}(X_j)\}$. These include individual $l$ and one individual $u_j$ whose $x_{u_jj}$ is adjacent to $\operatorname{med}(X_j)$, that is, $u_j= a_{lj} \lceil \frac{n-1}{2} \rceil + \left(1-a_{lj}\right) \lceil\frac{n+1}{2} \rceil$. Thus, for any $j=1,\dots,p$, we have $a'_{ij}=a_{ij}$ for all 
        $i \notin \{l,u_j\}.$
    Since $\sum_{i=1}^n a_{ij}=n/2=\sum_{i=1}^n a'_{ij}$, it follows that
        $a'_{lj}-a_{lj}=-\left(a'_{u_jj}-a_{u_jj}\right)$, for any $j=1,\dots,p$.

    Consider now the difference in $t_{jj'}$ in the neighboring datasets $D$ and $D'$. We have
    \begin{align}
    t_{jj'}(D')-t_{jj'}(D)
    &=\sum_{i=1}^n a'_{ij}a'_{ij'}-a_{ij}a_{ij'}\\
    &=\sum_{i=1}^n\left[a'_{ij}\left(a'_{ij'}-a_{ij'}\right)
    +a_{ij'}\left(a'_{ij}-a_{ij}\right)\right]\\
    &= 
    \sum_{i\in\{l,u_{j'}\}}a'_{ij}\left(a'_{ij'}-a_{ij'}\right) + 
    \sum_{i\in\{l,u_j\}}a_{ij'}\left(a'_{ij}-a_{ij}\right)\\
    &= 
    \left(a'_{lj}-a'_{u_{j'}j}\right)\left(a'_{lj'}-a_{lj'}\right)
    +\left(a_{lj'}-a_{u_jj'}\right)\left(a'_{lj}-a_{lj}\right).
\end{align}
If either $a'_{lj'}-a_{lj'}=0$ or $a'_{lj}-a_{lj}=0$, then $|t_{jj'}(D')-t_{jj'}(D)|\leq|1|\cdot|1|=1$. Otherwise, when both pairs differ, enumerating the remaining four possible configurations of $a'_{lj},a_{lj},a'_{lj'},a_{lj'}$ yields
\begin{equation}
t_{jj'}(D')-t_{jj'}(D)\in\left\{\pm(1-a'_{u_{j'}j}-a_{u_jj'}),~\pm(a'_{u_{j'}j}-a_{u_jj'})\right\}.
\end{equation}
Hence, $|t_{jj'}(D')-t_{jj'}(D)|\leq 1$ and the bound is tight. Therefore, 
$\Delta(t_{jj'})=1$.
\end{proof}

Importantly, to construct $t_{jj'}$ we do not need to allocate privacy budget to estimate the medians. We simply 
fix each marginal total at $n/2$ (adjusted as needed when $n$ is odd), which incurs no privacy cost since $n$ is public.
Thus, we can dedicate the privacy budget to quantities needed to estimate the copula correlations.

For marginal distributions that can have ties with non-zero probability, one can enforce uniqueness by adopting a data-independent, lexicographic tie-breaking rule that does not compromise privacy. Specifically, prior to accessing any data, for each variable $X_j$ considered as potentially having ties, generate a $n\times 1$ vector of random keys, $k_{1j}, \dots, k_{nj}$, sampled independently from $k_{ij} \sim N(0,1)$ and set before the data are collected.
Then, define a lexicographic ordering on the pairs $(x_{ij}, k_{ij})$ as
$\left(x_{ij}, k_{ij}\right) \succeq \left(x_{i'j}, k_{i'j}\right)$ if either  
$x_{ij} > x_{i'j}$, or 
($x_{ij} = x_{i'j}$ and $k_{ij} \ge k_{i'j}$).
We define $\operatorname{med}(X_j)$ with respect to the ordering of the pairs. Under this construction, we can allocate each $x_{ij}$ as above or below the median, so the assumptions of Theorem~\ref{thm:sensitivity} hold and
$\Delta(t_{jj'})=1$.  

Because the random keys are fixed in advance and independent of the data, this strategy for managing ties does not affect the definition of neighboring databases and hence does not impact the privacy guarantee. 
Additionally, this strategy can be applied regardless of whether the data contain ties. If no ties exist, the induced ordering coincides with the original ordering of the records. As a result, inspection of $D$ to determine which $X_j$ have ties is not required, thereby avoiding any additional privacy loss from peeking at $D$.

Utilizing the lexicographic tie-breaking rule
could affect the estimates of the 
dependence structure.
However, as long as the number of ties at the median is not large, the effect 
should be relatively benign. 
In the supplementary material, we present simulations illustrating that  the tie-breaking strategy can perform well for data with moderate amounts of ties at the median, and that it can perform well 
for ordinal data with a small number of levels.

Finally, we note that when we apply the geometric mechanism independently to each of the $p(p-1)/2$ counts $t_{jj'}$, we can use Proposition \ref{prop:seq} to show that the overall 
$\ell_1$ sensitivity of the collection $\mathcal{T} = \{t_{jj'}\}_{j<j'}$ is bounded by  $\Delta\left(\mathcal{T}\right)=p(p-1)/2.$

\subsection{Bayesian Post-processing Inference for Copula Correlations} \label{sec:noiseaware}

We next present the Bayesian post-processing method for obtaining posterior intervals for any $R_{jj'}$ and ultimately for $R$. We call it the Bayesian noise-aware method, abbreviated as Bayes-NA.  
The basic strategy is to treat each $t_{jj'}$ and ultimately $\mathcal{T}$ as  nuisance parameters, and estimate $R_{jj'}$ and ultimately $R$ via a marginal likelihood that averages over possible values of $T_{jj'}$. 

For any pair of variables $(X_j, X_{j'})$, let $\widetilde{T}_{jj'}$ be the random variable corresponding to 
the geometric mechanism applied to $t_{jj'}$.  
The marginal likelihood under the geometric mechanism given the realized noisy count $\tilde{t}_{jj'}$ is
\begin{equation}
\begin{aligned}
    \Pr(\widetilde{T}_{jj'} = \tilde t_{jj'}\mid R_{jj'}) &= 
\sum_{t=0}^{n/2}\Pr(\widetilde{T}_{jj'}=\tilde t_{jj'}\mid T_{jj'}=t)\Pr(T_{jj'}=t\mid R_{jj'}) \\ 
    & = \sum_{t=0}^{n/2}
    \frac{1 - e^{-\frac{\epsilon}{\Delta}}}{1 + e^{-\frac{\epsilon}{\Delta}}} 
     e^{-\left| \tilde{t}_{jj'} - t \right|  \frac{\epsilon}{\Delta}}
    \frac{\binom{n/2}{t}^2 \left( \frac{\pi + 2 \arcsin(R_{jj'})}{\pi - 2 \arcsin(R_{jj'})} \right)^{2t}}
    {\sum_{k=0}^{n/2} \binom{n/2}{k}^2 \left( \frac{\pi + 2 \arcsin(R_{jj'})}{\pi - 2 \arcsin(R_{jj'})} \right)^{2k}}.
\end{aligned}\label{eq:likelihoodtj}
\end{equation}
When $p=2$ so that interest focuses solely on $R_{12}$, one can impose a prior distribution on $R_{12}$ such as the uniform distribution.  Simple methods like a grid sampler can be used for posterior inference.

Of course, Gaussian copulas typically are used with $p>2$ variables. Thus, we now extend to inference for $R$. We require the joint likelihood,  $\Pr(\widetilde{T}_{12}, \dots, \widetilde{T}_{p-1,p} \mid R)$. However, the collection of noisy statistics, $\widetilde{\mathcal{T}} = (\widetilde{T}_{12}, \dots, \widetilde{T}_{p-1,p})$, is a perturbed projection of an underlying $2^p$ contingency table. The joint likelihood is thus a sum over all possible $2^p$ cell configurations. This sum leads to an analytically intractable expression even for moderate $p$. 

To circumvent this computational challenge, we instead  use the composite likelihood, obtained by multiplying the contributions from \eqref{eq:likelihoodtj} for each $(X_j, X_{j'})$.  We have   
\begin{equation}
    f_{\mathrm{CL}}(\widetilde{\mathcal{T}}\mid R) = 
    \prod_{1\leq j<j'\leq p}
    \Pr(\widetilde{T}_{jj'}=\tilde t_{jj'}\mid R) = 
    \prod_{1\leq j<j'\leq p}
    \Pr(\widetilde{T}_{jj'}=\tilde t_{jj'}\mid R_{jj'}).
\end{equation}
This strategy neatly extends the method for  inference about a single correlation coefficient to the multivariate setting; analysts compile the $p(p-1)/2$ pairwise statistics $t_{jj'}$ and use Proposition \ref{prop:seq} for 
privacy accounting.
The use of composite likelihoods for DP post-processing is suggested by \citet{pistner}, who use it to estimate Bayesian latent class models for categorical data. Outside of privacy contexts, research has shown that composite likelihoods provide reasonable approximations for joint likelihoods like the Gaussian copula 
\citep{bai2014efficient, varin2011overview, ribatet2012bayesian}. We empirically assess the accuracy of the composite likelihood approximation in the supplementary material.

As a prior distribution for $R$, we use the LKJ prior \citep{lkjprior}, $\pi(R) = \mathrm{LKJ}(R \mid 1)$. The approximate posterior inference is  $p(R\mid \widetilde{\mathcal{T}}) \propto \pi(R)f_{\mathrm{CL}}(\widetilde{\mathcal{T}}\mid R)$.
Posterior sampling is straightforward to implement in the software package \textsf{Stan} using a No-U-turn sampler \citep{nouturn}, an adaptive variant of Hamiltonian Monte Carlo. The posterior expectation can be used as a point estimate for $R$. The support of the LKJ prior is the space of valid correlation matrices, that is, symmetric matrices with unit diagonal, off-diagonal entries in $[-1,1]$, and positive semi-definiteness. \textsf{Stan} performs Hamiltonian Monte Carlo directly on this manifold via its built-in Cholesky factor parameterization, so every posterior draw of $R$ satisfies all of these constraints. 
The set of valid correlation matrices is convex, so posterior summaries such as the posterior mean are themselves valid correlation matrices. No post-hoc projection or reparameterization is required.

\section{MLE of Copula Correlation}
\label{sec:mechanism}

Although the Bayesian post-processing procedure offers uncertainty estimates,
 some analysts may prefer to eschew the computations in favor of an MLE of $R$. In Section \ref{sec:mle}, we present such an estimate, once again starting with a method for a single $R_{jj'}.$ 

This MLE 
presumes the noisy two-way tables are internally coherent, that is, all the noisy statistics are non-negative and sum to the known marginal totals. This can be ensured using a truncated geometric mechanism \citep{ghosh2012universally}: add  unbounded geometric noise as in   Definition~\ref{mechanism:geom}, and then apply a post-processing step that remaps negative values to zero and values exceeding any imposed upper limit to that limit.  However, as we illustrate in the supplementary material, this can generate large spikes of probability mass at zero and at the upper limit, which in turn can affect the estimates of the underlying true count.  We therefore develop two other variants of the geometric mechanism for ensuring internally coherent tables with fixed marginal totals. We present one of these methods in Section \ref{sec:trunDP}.  The other is described in the supplementary material, along with guidance for selecting among these three mechanisms. 
We note that these DP algorithms could be used in other contexts where one requires bounded noisy statistics.

\subsection{Noisy MLE Algorithm}\label{sec:mle}

To determine the MLE, we work with the exponential family representation of the distribution of $T_{jj'}$ described in Section \ref{sec:statistic}. Its parameter $\eta_{jj'}$ satisfies the estimating equation, $t_{jj'} = \mathbb{E}_{\eta_{jj'}}(T_{jj'})$. As DP requires us to use $\widetilde{t}_{jj'}$ rather than $t_{jj'}$,  we plug $\widetilde{t}_{jj'}$ into the estimating equation.  Using the probability mass function from  \eqref{eq:PMF_T}, we set 
\begin{equation}
\label{eq:MLE}
   \widetilde{t}_{jj'} =  \left(\sum_{t=0}^{n/2} \binom{n/2}{t}^2  \left( \frac{\pi + 2 \arcsin(\hat{R}_{jj'})}{\pi - 2 \arcsin(\hat{R}_{jj'})} \right)^{2t}\right)^{-1} \left(\sum_{t=0}^{n/2} t  \binom{n/2}{t}^2  \left( \frac{\pi + 2 \arcsin(\hat{R}_{jj'})}{\pi - 2 \arcsin(\hat{R}_{jj'})} \right)^{2t}\right).
\end{equation}

Since the mapping $R_{jj'}\mapsto\eta_{jj'}$ is strictly increasing, \eqref{eq:MLE} admits a unique solution for $\hat{R}_{jj'}$ when $\widetilde{t}_{jj'}\in (0, n/2)$. The right-hand side of~\eqref{eq:MLE} is continuous and monotone in $\hat R_{jj'}$, so the equation can be efficiently solved via bisection or other root-finding algorithms. In the boundary cases, $\hat R_{jj'}=-1$ when $\widetilde{t}_{jj'}=0$, and $\hat R_{jj'}=1$ when $\widetilde{t}_{jj'}=n/2$.  No solution exists when $\widetilde{t}_{jj'}\notin [0, n/2]$. Hence, to guarantee the existence of an MLE, we require a range-preserving DP mechanism so that 
$\widetilde{t}_{jj'}\in[0, n/2]$, e.g., as in Section \ref{sec:trunDP}.
We refer to $\hat R_{jj'}$ obtained from \eqref{eq:MLE} as the noise-naive MLE of the copula correlation, which we abbreviate as MLE-NN. The MLE-NN is $\epsilon$-DP, as it uses only the differentially private counts without ever accessing the underlying confidential data.

Deriving an MLE for $R$ is more complicated.  
Since each $\hat R_{jj'}$ is estimated independently, the correlation matrix formed from $\{\hat R_{jj'}\}_{1 \leq j < j' \leq p}$
may not be positive semi-definite (PSD). To resolve this issue, we project the matrix form of $\{\hat R_{jj'}\}_{1 \leq j < j' \leq p}$ onto the nearest valid correlation matrix using Higham's algorithm~\citep{higham2002computing}, resulting in the final estimator which we label $\widehat R$. Higham's algorithm returns the unique matrix that is closest to the input in Frobenius norm, subject to being symmetric, having unit diagonal, and being positive semi-definite. Together these properties also imply off-diagonal entries in $[-1,1]$. Hence $\widehat R$ is guaranteed to be a valid correlation matrix. By Proposition \ref{prop:postprocess}, $\widehat R$ is DP, with Proposition \ref{prop:seq} guaranteeing a privacy budget  no larger than $(p(p-1)/2)\epsilon$. We note that the total privacy budget used for $\widehat R$ can be large, particularly when $p$ is large and $\epsilon$ is not small.  

\subsection{Truncated DP Mechanism}\label{sec:trunDP}

We now present a truncated version of the geometric mechanism that, in our simulations, offers more accurate inferences than the standard truncated geometric mechanism in \cite{ghosh2012universally}, as well as the other variant presented in the supplementary material.  As the mechanism can be used in contexts beyond estimating DP Gaussian copula correlations, we describe it using a  more general notation and arbitrary lower and upper bounds $[L,U]$. 

Let $M(D)$ be some arbitrary true count, which we abbreviate simply by  writing $M$. 
Suppose we apply the geometric mechanism from Definition \ref{mechanism:geom}, resulting in the noisy count $\wt M_{\mathrm{Geom}}(D)$, which we write simply as $\wt M_{\mathrm{Geom}}$.
Assuming a uniform prior distribution $\pi(M)\propto
\mathbb{I}(L\leq M\leq U)$, we can compute the posterior distribution, 
\begin{equation}
    \pi(M\mid \wt M_{\mathrm{Geom}})
    \propto e^{-\frac{\epsilon}{\Delta} \cdot |M-\wt M_{\mathrm{Geom}}|}  \mathbb{I}(L\leq M\leq U).
\end{equation}
 Thus, the maximum a posteriori (MAP) estimator of $M$ is 
$\widetilde{M}_{\mathrm{TGM}} = \min\left(U, \max(L, \widetilde{M}_{\mathrm{Geom}})\right)$,
which is exactly the output of the truncated geometric mechanism of \cite{ghosh2012universally}.  

It is well known that the MAP estimator minimizes Bayes risk under the 0-1 loss function, specifically,
\begin{equation}
    \widetilde{M}_{\mathrm{TGM}} = \arg\min_{\hat{M}} \mathbb{E}_{M \sim \pi(M \mid \widetilde{M}_{\mathrm{Geom}})} 
    \left[ \mathbb{I}(\hat{M} \ne M) \right].
\end{equation}
If instead we seek to minimize the mean squared error of the point estimator, we should use the posterior expectation as the Bayes estimator, i.e., 
\begin{equation*}
    \mathbb{E}(M\mid \wt M_{\mathrm{Geom}}) = \arg\min_{\hat{M}} \mathbb{E}_{M \sim \pi(M \mid \widetilde{M}_{\mathrm{Geom}})} 
    \left[ (\hat{M} - M)^2 \right].
\end{equation*}

This realization forms the basis of our alternative DP mechanism, which we call the Bayesian truncated geometric mechanism, abbreviated as BTGM.  The BTGM  outputs 
$\mathbb{E}(M\mid \wt M_{\mathrm{Geom}})$, denoted by $\widetilde{M}_{\mathrm{BTGM}}$. 
Formally, let $m=\widetilde{M}_{\mathrm{Geom}}$ and $\alpha=e^{-\epsilon/\Delta}$, where $\Delta$ is the sensitivity. Under the uniform prior distribution for $M$, the posterior expectation is
\begin{equation}
\label{eq:postexp}
\widetilde{M}_{\mathrm{BTGM}}(m)
=
\begin{cases}
L+\frac{\alpha}{1-\alpha}
\frac{1-(U-L+1)\alpha^{U-L}+(U-L)\alpha^{U-L+1}}
{1-\alpha^{U-L+1}}
& \text{if } m < L, \\
L+
\frac{(m-L)(1-\alpha^2)+\alpha^{m-L+1}-(U-L+1)\alpha^{U+1-m}+(U-L)\alpha^{U+2-m}}
{(1-\alpha)
\bigl(1+\alpha-\alpha^{m-L+1}-\alpha^{U+1-m}\bigr)}
& \text{if } L \leq m \leq U, \\
L+\frac{(U-L)-(U-L+1)\alpha+\alpha^{U-L+1}}
{(1-\alpha)\bigl(1-\alpha^{U-L+1}\bigr)}
& \text{if } m > U.
\end{cases}
\end{equation}

 We omit the derivation for brevity.  It involves sums of geometric series and of arithmetic-geometric series.  For MLE-NN, we set $L=0$ and $U=n/2$ or $U=(n+1)/2$, as needed.

BTGM is $\epsilon$-DP, as it applies the data-independent transformation \eqref{eq:postexp} to the output of the $\epsilon$-DP geometric mechanism and hence preserves the privacy guarantee by Proposition \ref{prop:postprocess}.
Note that $\widetilde{M}_{\mathrm{BTGM}}$ might not be an integer. If one seeks to release it directly, it can be rounded to the nearest integer. This is not necessary for MLE-NN, as we can plug non-integer values into the estimating equation.

\section{Simulation Studies}
\label{sec:experiment}

In this section, we present results of simulation studies of the DP Gaussian copula correlation estimation methods from Section \ref{sec:method} and Section \ref{sec:mechanism}.  
In all simulation scenarios,  we  simulate independent observations from Gaussian copulas.  In each run, we randomly generate the correlation matrix $R$ from a scaled Wishart distribution with $p+1$ degrees of freedom, where we vary the number of variables $p \in \{2, 5, 10\}$. Specifically, we first sample $W \sim \mathrm{Wishart}(p+1, I_p)$, where $I_p$ is the $p \times p$ identity matrix,
and then convert $W$ to a correlation matrix via the standard transformation $R_{jj'} = W_{jj'} / \sqrt{W_{jj}\, W_{j'j'}}$. 
The resulting $R$ is uniformly distributed over the space of $p \times p$ valid correlation matrices, equivalent to the LKJ(1) distribution \citep{lkjprior}. For $p=2$, this reduces to $R_{12} \sim \mathrm{Unif}(-1,1)$.  We consider results for $p=2$ in Section \ref{sec:utility} and for $p>2$ in Section \ref{sec:multivariate}.  We describe the sample sizes and marginal distributions $\{F_j: j = 1, \dots, p\}$ in those sections.

We compare our proposed methods with two approaches inspired by \cite{li2014differentially} and \cite{asghar2020differentially}.
To make more direct comparisons, we adopt the DPCopula-Kendall version in \cite{li2014differentially}, denoted as Li-Kendall, in which all privacy budgets are used to add noise to Kendall's $\tau$ coefficients without estimating the marginals. 
We also consider an  implementation of the method in \cite{asghar2020differentially}, which we call Asghar-dpc. Instead of discretizing continuous variables into multiple bins, we dichotomize them at the theoretical median of their marginal distributions (rounded to two decimal places). We allocate the entire privacy budget to the two-way contingency tables in Algorithm 2 of \cite{asghar2020differentially}, and we use the noisy row and column sums of each joint table as substitutes for the one-way marginals in their Algorithm 1. We note that adopting the method of \cite{asghar2020differentially} exactly as they present it results in less accurate estimation of the copula correlations, since some privacy budget is allocated elsewhere. 

We measure the accuracy of any point estimator of the correlation matrix using the element-wise mean absolute error (MAE). Let $\hat{R}^{(h)}_{jj'}$ be a point estimate of $R_{jj'}^{(h)}$ computed in simulation run $h$, where $h=1, \dots, 1000$. We have 
\begin{equation}
    \text{MAE}(\hat{R}) 
    = \sum_{h=1}^{1000} \frac{2}{p(p-1)} \sum_{1 \leq j < j' \leq p} \big| \hat R_{jj'}^{(h)} - R_{jj'}^{(h)} \big|/1000.
\end{equation}

In addition, we evaluate the interval estimates for Bayes-NA using the empirical coverage rates and average interval lengths.  For any simulation run $h=1, \dots, 1000$, let $\tilde Q^{(h)}_{jj'}$ represent the 95\% credible interval for $R^{(h)}_{jj'}$  computed using the posterior distribution in Section \ref{sec:method}.  The empirical coverage rate is 
\begin{equation}
    \text{Coverage}= \sum_{h=1}^{1000}
    \frac{2}{p(p-1)} \sum_{1 \leq j < j' \leq p} 
    \mathbb{I}\left(R_{jj'}^{(h)} \in \tilde Q^{(h)}_{jj'}\right)/1000. 
\end{equation}
Writing $|\tilde Q^{(h)}_{jj'}|$ for the difference between the lower and upper limits of the interval $\tilde Q^{(h)}_{jj'}$, the average interval length is 
\begin{equation}
\text{Length}= \sum_{h=1}^{1000} 
    \frac{2}{p(p-1)} \sum_{1 \leq j < j' \leq p} 
     |\tilde Q^{(h)}_{jj'}|/1000.
\end{equation}

\subsection{MAEs for Bivariate Case}\label{sec:utility}

We first analyze simulations with $p=2$. We set $F_1= 
\mathrm{Gamma}(2, 1)$ and $F_2 = 
\mathcal{N}(0, 1)$.  We consider $n\in\{50, 100, 200, 500, 1000\}$ and $\epsilon\in\{0.01, 0.1, 1\}$. The inclusion of the small $\epsilon=0.01$ also serves as a reference for the simulations with $p>2$. For example, when $p=5$  and $\epsilon=0.1$, the privacy budget per pairwise correlation is approximately $\epsilon / \binom{5}{2}=0.01$. For Bayes-NA, we generate $1000$ posterior samples after a burn-in of 1000 iterations.

\begin{figure}[t]

\centering{

\includegraphics[width=\linewidth]{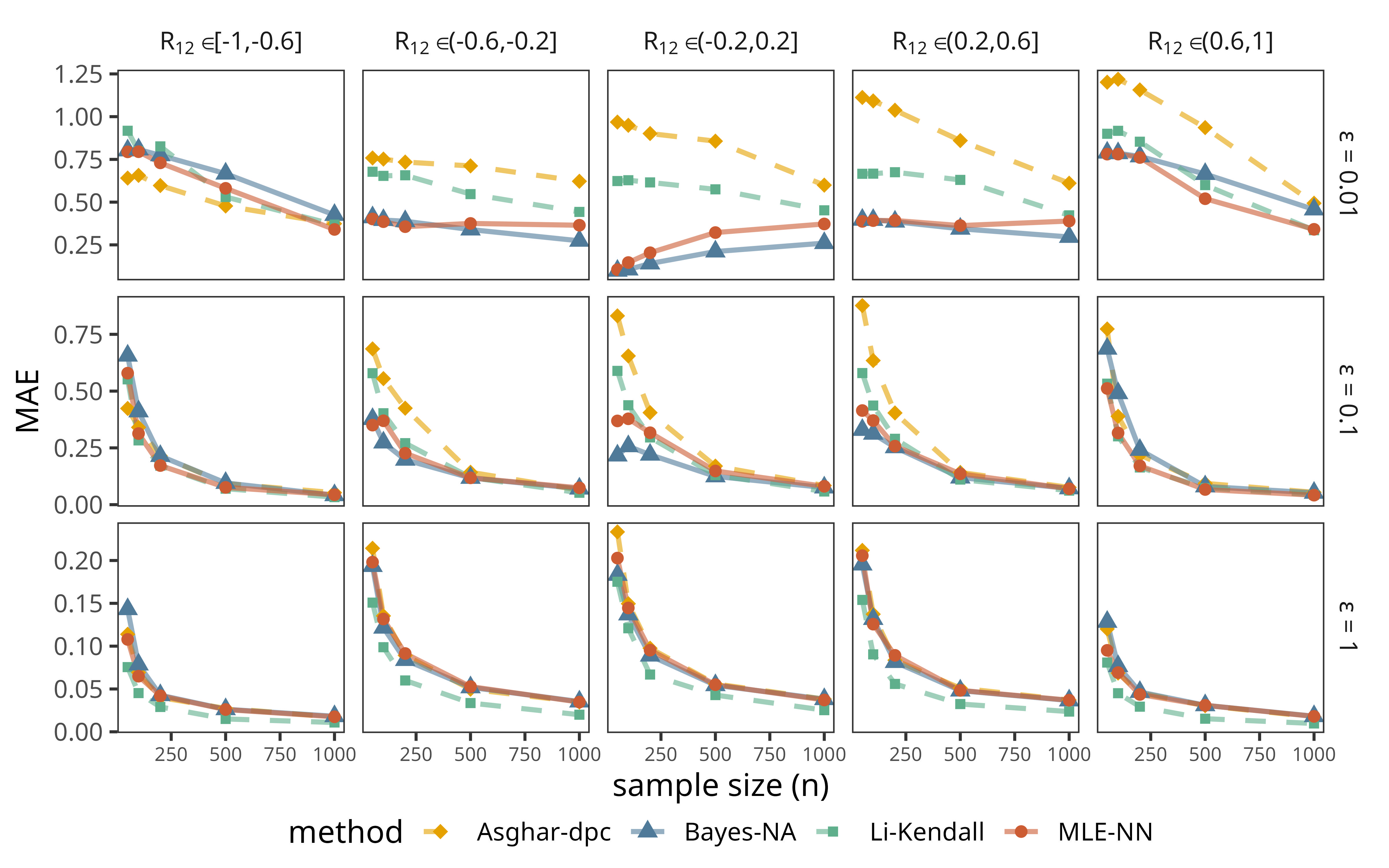}

}

\caption{\label{fig:MAE_p2_byrho}MAEs of $\hat R_{12}$  when $p=2$ and $n \in \{50, 100, 200, 500, 1000\}$ for $\epsilon \in \{0.01, 0.1, 1\}$.  Results based on 1000 simulation runs with a maximum Monte Carlo error of 0.05. Runs are grouped according to the simulated values of $R_{12}^{(h)}$. Solid lines: Bayes-NA, MLE-NN. Dashed lines: Li-Kendall, Asghar-dpc.  Note that the scale of $y$-axis differs across panels.}

\end{figure}

Figure \ref{fig:MAE_p2_byrho} displays the MAEs organized by binning the 1000 values of $R^{(h)}_{12}$ 
in consecutive bins of width 0.4.  Approximately 200 runs comprise each correlation bin.
Across most bins, Bayes-NA and MLE-NN attain the lowest MAEs, generally tracking one another  closely. Their MAEs tend to be largest for strong correlations, reflecting the difficulty of recovering extreme dependence under privacy constraints. Their MAEs are particularly small for correlations near zero when $\epsilon=0.01$ and $n \leq 200$,
a pattern not as evident for larger $\epsilon$. 
With small $n$ and $\epsilon$, the noise mechanism flattens the likelihood function for $R$. Thus, the posterior distribution is dominated by the prior distribution, and the posterior expectation concentrates around zero regardless of the true correlation. This also explains why the MAE near zero in Figure~\ref{fig:MAE_p2_byrho} does not decrease much with $n$. When the true correlation is itself near zero, this prior-dominated estimate is already accurate. For correlations farther from zero, the MAE does decrease with $n$ once the signal overcomes the DP noise.

In these highly challenging situations, analysts may want to use more informative prior distributions with Bayes-NA. For Asghar-dpc, the MAE increases steadily as correlations move towards the positive range. Because its noise mechanism remaps negative counts to 0, 
the resulting distribution of estimated correlations has a spike at $-1$, regardless of the true underlying correlation. This effect also explains the relatively lower MAE for correlations near $-1$ when $\epsilon = 0.01$.  For Li-Kendall, the MAE can be smaller than the MAEs for Bayes-NA or MLE-NN when $\epsilon=1$, in which case the DP algorithms do not add much noise. This is because Li-Kendall does not use coarsened data values.
However, these differences are typically around 0.01, which we expect to be negligible in practical contexts.

\subsection{MAEs for Multivariate Case}\label{sec:multivariate}

For the multivariate case, we consider $n \in \{200, 500, 1000\}$ and $\epsilon\in\{0.1, 0.5, 1, 5\}$. Here, $\epsilon$ is the total privacy budget aggregated over all $\binom{p}{2}$  noisy counts. When $p = 5$, we use marginal distributions
  $F_1 = \mathcal{N}(0, 1)$, $F_2=\mathrm{Exp}(1)$, $F_3 = \mathrm{Gamma}(2, 1)$,  $F_4 = \mathrm{Beta}(2, 5)$, and $F_5 = t_5$.
When $p=10$, we use 
   $F_1 = \mathcal{N}(0, 1)$, $F_2 = \mathcal{N}(1, 2)$, $F_3 = t_3$, $F_4 = t_{10}$, $F_5=\mathrm{Gamma}(1, 2)$, $F_6 = \mathrm{Gamma}(5, 2)$, $F_7 = \mathrm{Beta}(2, 5)$, $F_8 = \mathrm{Beta}(5, 2)$, $F_9 = \mathrm{Exp}(1)$, and  $F_{10} = \mathrm{Exp}(2)$.

\begin{figure}[t]

\centering{

\includegraphics[width=\linewidth]{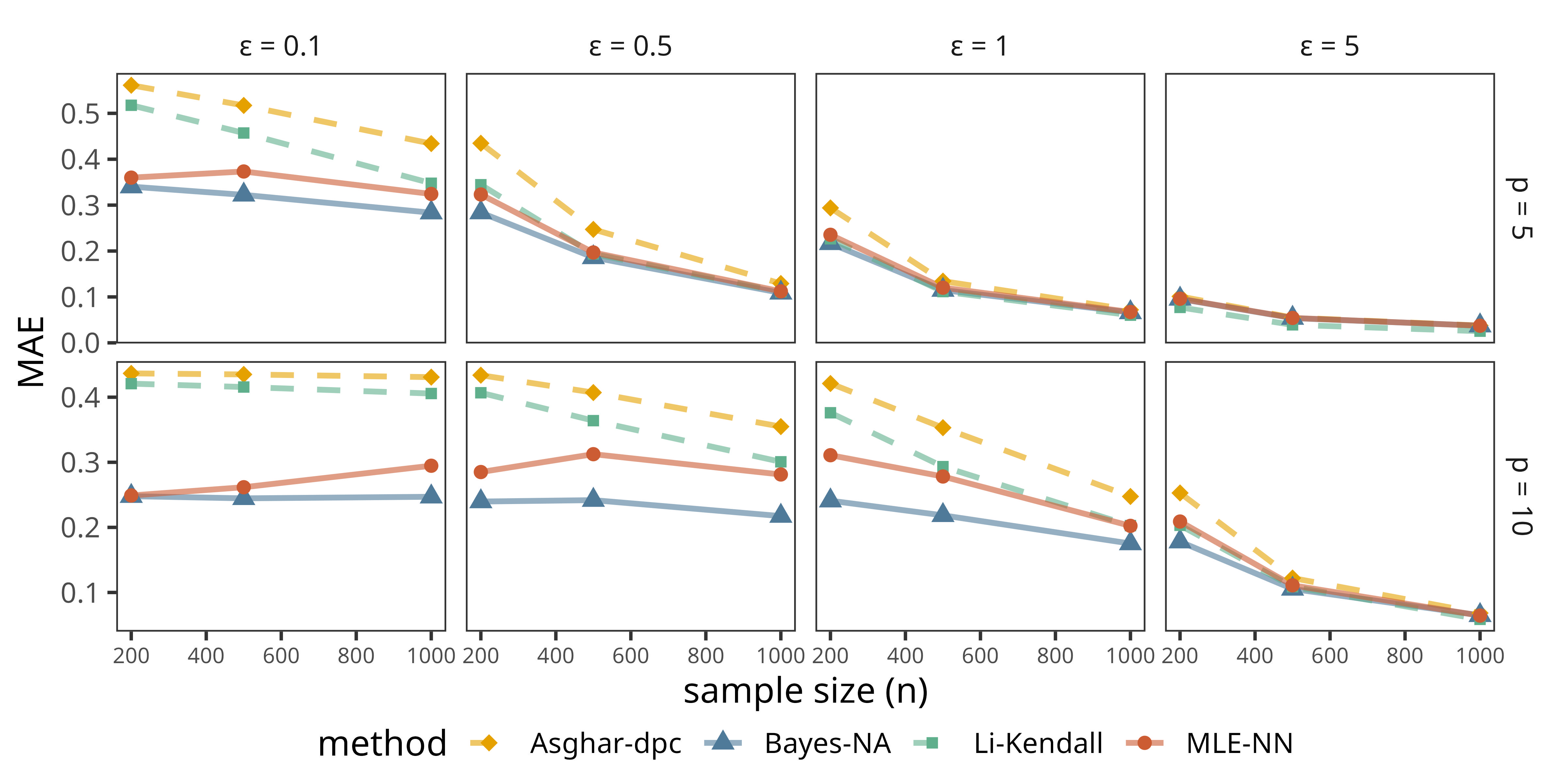}

}

\caption{\label{fig:MAE}MAEs of $\hat{R}$ when $p\in \{5, 10\}$ and $n \in \{200, 500, 1000\}$ for total privacy budget $\epsilon \in \{0.1, 0.5, 1, 5\}$. Results based on 1000 simulation runs with a maximum Monte Carlo error of 0.006. Solid lines: Bayes-NA, MLE-NN. Dashed lines: Li-Kendall, Asghar-dpc.}

\end{figure}

Figure \ref{fig:MAE} displays the MAEs over 1000 simulation runs of each setting.  Bayes-NA and MLE-NN tend to have MAEs that are smaller than or practically equal to  those for Asghar-dpc and Li-Kendall.
While the MAEs increase for all methods at these larger values of $p$ compared to $p=2$, MLE-NN and especially Bayes-NA maintain relatively stable performance, suggesting better scalability in higher-dimensional settings.  
As expected, increasing $n$ leads to improved accuracy, with MAEs gradually converging towards zero. When the per-correlation privacy budget $\epsilon / \binom{p}{2}$ is small, a larger sample size is required before the reduction in error becomes evident. In fact, over a range of small sample sizes, the error curve may remain flat or even slightly increase as sample size grows. This pattern is consistent with what we observe in the bivariate case when the true correlation is small, where estimates with small $n$ and $\epsilon$ tend to be pulled towards zero.

\subsection{Coverage Analysis for Bayes-NA}\label{sec:freq}

Bayes-NA and MLE-NN offer similar results with MAEs that are better or practically the same as the MAEs for Asghar-dpc and Li-Kendall.  However, a major advantage
 of Bayes-NA is that it produces posterior intervals for $R$. In this section, we use the simulations from Section \ref{sec:utility} and  Section \ref{sec:multivariate} to examine properties of these interval estimates.

 Table \ref{tab:coverage} displays the average coverage rates and interval lengths for the 95\% posterior intervals. 
 In most settings, the empirical coverage is near the nominal 95\% level. It rarely falls below 93\%. Notably, when $\epsilon=0.01$, the variance induced by the privacy mechanism can be so large that the intervals 
 are too wide to be of 
 practical use for correlation estimation. As $\epsilon$ and $n$ increase, 
 the average interval length decreases substantially. 

To assess the effect of dimensionality under comparable privacy conditions, we can compare the settings $(p=2,\epsilon=0.1)$, $(p=5, \epsilon=1)$ and $(p=10, \epsilon=5)$, all of which correspond to a 

\newpage
\begin{table}[H]
\begin{longtable}{@{}rrrrrrrrrrr@{}}
\caption{Empirical coverage rate (in \%) and average length of element-wise 95\% intervals for pairwise correlations produced by Bayes-NA. Results are averaged over 1000 runs.}%
\label{tab:coverage}\tabularnewline
\toprule\noalign{}
\multicolumn{1}{c}{ } & \multicolumn{5}{c}{Coverage} & \multicolumn{5}{c}{CI length} \\
\cmidrule(l{3pt}r{3pt}){2-6} \cmidrule(l{3pt}r{3pt}){7-11}
$\epsilon$ 
& 0.01 & 0.1 & 0.5 & 1 & 5
& 0.01 & 0.1 & 0.5 & 1 & 5 \\

\midrule\noalign{}
\endfirsthead

\toprule\noalign{}
\multicolumn{1}{c}{ } & \multicolumn{5}{c}{Coverage} & \multicolumn{5}{c}{Length} \\
\cmidrule(l{3pt}r{3pt}){2-6} \cmidrule(l{3pt}r{3pt}){7-11}
$\epsilon$ 
& 0.01 & 0.1 & 0.5 & 1 & 5
& 0.01 & 0.1 & 0.5 & 1 & 5 \\

\midrule\noalign{}
\endhead

\bottomrule\noalign{}
\endlastfoot

\addlinespace[0.3em]
\multicolumn{11}{l}{$p=2$}\\
\hspace{1em}50 & 95.3 & 94.0 & 93.0 & 92.4 & 93.5 & 1.892 & 1.813 & 1.100 & 0.776 & 0.613\\
\hspace{1em}100 & 94.3 & 94.9 & 93.6 & 92.3 & 92.5 & 1.889 & 1.638 & 0.681 & 0.504 & 0.438\\
\hspace{1em}200 & 93.7 & 95.2 & 94.6 & 94.9 & 95.0 & 1.878 & 1.201 & 0.415 & 0.342 & 0.316\\
\hspace{1em}500 & 95.1 & 94.7 & 94.3 & 94.9 & 95.2 & 1.815 & 0.596 & 0.226 & 0.205 & 0.199\\
\hspace{1em}1000 & 94.6 & 93.2 & 95.6 & 95.5 & 96.1 & 1.632 & 0.317 & 0.153 & 0.144 & 0.142\\
\addlinespace[0.3em]
\multicolumn{11}{l}{$p=5$}\\
\hspace{1em}200 &  & 94.7 & 94.9 & 94.9 & 93.6 &  & 1.492 & 1.326 & 1.058 & 0.448\\
\hspace{1em}500 &  & 94.8 & 95.0 & 94.8 & 94.2 &  & 1.446 & 0.942 & 0.601 & 0.259\\
\hspace{1em}1000 &  & 94.9 & 95.3 & 94.4 & 93.9 &  & 1.323 & 0.590 & 0.346 & 0.177\\
\addlinespace[0.3em]
\multicolumn{11}{l}{$p=10$}\\
\hspace{1em}200 &  & 94.9 & 95.3 & 95.1 & 95.1 &  & 1.146 & 1.141 & 1.125 & 0.869\\
\hspace{1em}500 &  & 94.9 & 94.5 & 94.7 & 94.2 &  & 1.145 & 1.114 & 1.038 & 0.530\\
\hspace{1em}1000 &  & 94.9 & 95.1 & 95.2 & 93.8 &  & 1.141 & 1.039 & 0.860 & 0.321\\
\bottomrule
\end{longtable}
\end{table}

\noindent per-correlation privacy budget of about 0.1. The coverage rates and average lengths are nearly identical across these settings, suggesting that the method can scale  with dimension (provided the total privacy budget is acceptable).

In the simulations with $p>2$, the Bayes-NA intervals have lower-than-nominal coverage rates for 
$R_{jj'}\approx\pm 1$ when both $n$ and $\epsilon$ are small. Supporting results are provided in the supplementary material. In these cases, the prior distribution  dominates the likelihood, and the PSD constraints force the LKJ prior for $p>2$ to assign little probability mass to correlations near $\pm 1$. 
In contrast, when either $n$ or $\epsilon$ becomes sufficiently large, the likelihood dominates the prior distribution, and the credible intervals have near-nominal coverage rates even for extreme correlations.

\section{Illustration Using Dietary Data}\label{sec:genuine}

We illustrate the methods using 
dietary intake data from two publicly available sources. First, we use the National Health and Nutrition Examination Survey (NHANES) Dietary Component for the 2017--2018 cycle \citep{cdc2020nhanes}, from which we extract the Dietary Interview -- Total Nutrient Intakes data.  Second, we use  the 
Food Patterns Equivalents Database (FPED) provided by the U.S.\ Department of Agriculture's Food Surveys Research Group \citep{bowman2020food}. The two datasets are matched by respondent sequence number. We focus on the first-day dietary recall. We analyze $p=7$ variables including total energy intake in kilocalories (Kcal), ratio of family income to poverty (Income), total grains intake (Grain), total dairy intake (Dairy), total vegetable intake (Veg), total protein food intake (PF), and added sugar (Sugar). The detailed classification of each food item can be found on the FPED website. 
The dataset has $n=5820$ individuals after filtering out missing values. We estimate DP Gaussian copula correlations using $\epsilon\in\{1,3\}$ and compare the results of  Bayes-NA with the non-private extended rank likelihood estimator (ERL) of \cite{d2007extending}.

\begin{figure}[t]

\centering{

\includegraphics[width=\linewidth]{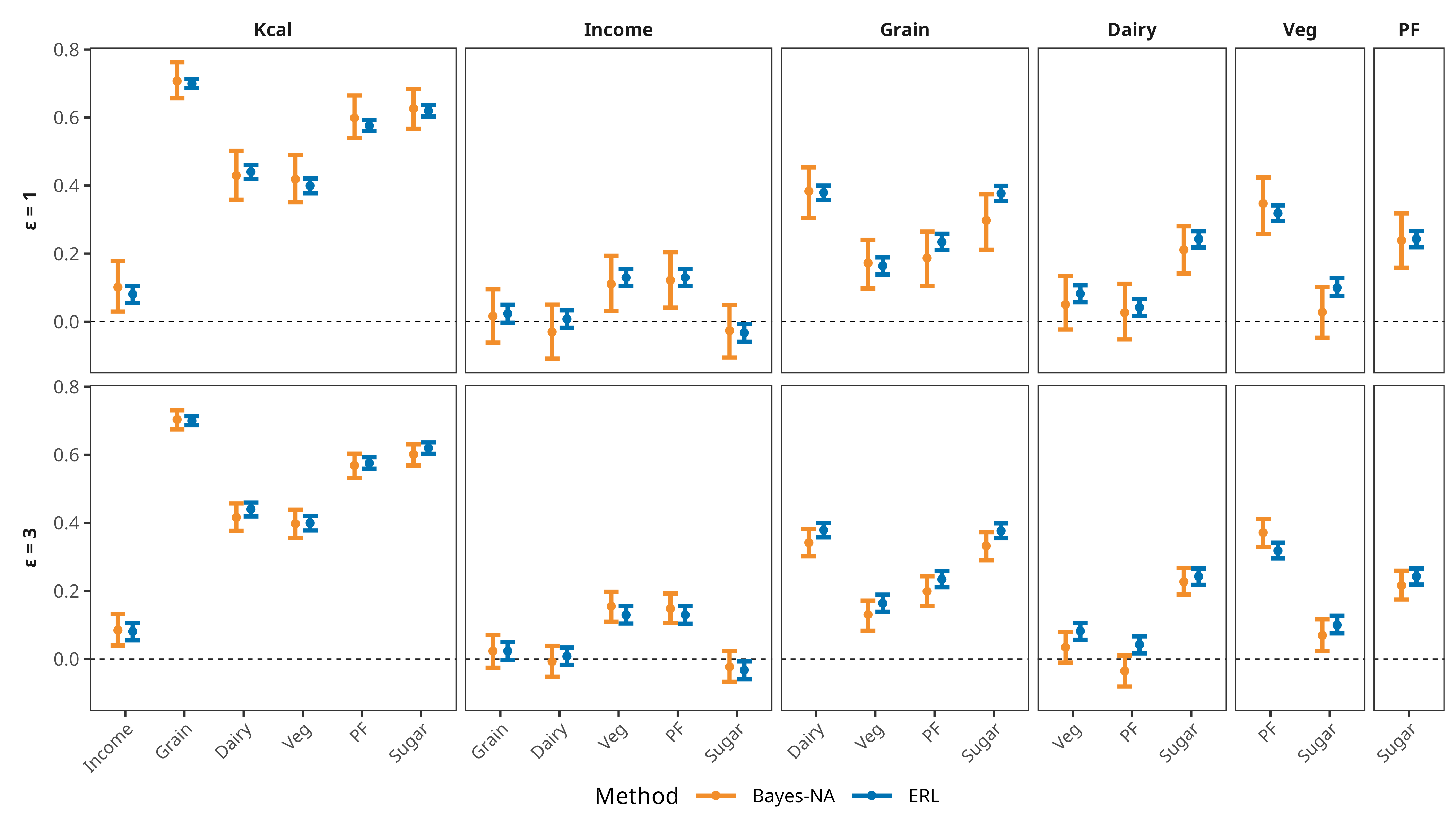}

}

\caption{\label{fig:CI_joint}Posterior summaries of pairwise correlations for NHANES and FPED data. Within each variable pair, the orange interval on the left is the Bayes-NA private estimate and the blue interval on the right is the non-private ERL estimate. Dots show posterior means and bars show the 2.5\% and 97.5\% posterior quantiles.}

\end{figure}

Figure~\ref{fig:CI_joint} summarizes the posterior distributions of the pairwise correlations among the seven variables based on 1000 posterior samples. 
The Bayes-NA interval estimates encapsulate the non-private ERL interval estimates. As $\epsilon$ increases, the credible intervals of Bayes-NA become narrower, as expected.  Overall, the resulting interpretations about associations are consistent for the private and non-private analyses.  For example, among all dietary components, both sets of results show that grain has the strongest positive correlation with total energy intake followed by added sugar, whereas the correlation with vegetables is relatively smaller. 
We note that posterior sampling via \textsf{Stan} is straightforward. In fact,  the runtime of Bayes-NA is faster than that of the ERL method. 

The posterior distribution of $R$ can be used for DP inferences for functionals of the correlations.  We illustrate these computations in the supplementary material, in which we  compute posterior distributions of regression coefficients using the posterior samples of $R$.

\section{Discussion}
\label{sec:discussion}

We propose a differentially private approach to characterizing dependence structures via the Gaussian copula model using low-sensitivity, coarsened statistics. We propose two estimation methods, a computationally efficient but noise-naive MLE, and a noise-aware Bayesian method that explicitly models the DP privacy mechanism. The latter provides 
interval estimates for the copula correlation and hence can serve as a more complete inferential framework for differentially private copulas. The simulation study shows that both methods can provide improved accuracy over extant approaches, especially for small $n$ and $\epsilon$. The Bayesian approach additionally provides reasonable coverage rates for interval estimation.

The estimator of $R$ does not consume any privacy budget on the marginals.
Combining our estimator with a separate DP procedure for marginal estimation thus offers a two-stage route to differentially private synthetic data.  There may be ways to use discretization to estimate the marginal counts with less privacy budget expenditure.  For example, if the analyst wishes to generate Gaussian distributed marginals, the analyst can  estimate two quantiles via  a DP mechanism and solve for the unique mean and variance of the Gaussian distribution that yields those quantiles.  For more general distributions, the analyst can estimate additional quantiles and fit splines between the points. 
We leave a careful study of this pipeline to future work.

Future work may also extend the proposed methodology to settings involving nominal categorical variables. One direction is to employ binary or multi-bit encodings, which would allow the coarsened statistics to be defined without requiring an inherent ordering and would remain compatible with the Bayesian inference framework developed here. Another direction is to generalize the Bayesian formulation to broader copula families to accommodate heavier-tailed or asymmetric dependence structures.

\section{Disclosure statement}\label{disclosure-statement}

The authors have
no conflicts of interest to declare.

\section{Data Availability Statement}\label{data-availability-statement}
The data that support the findings of this study are openly available at \url{http://www.ars.usda.gov/nea/bhnrc/fsrg} and \url{https://wwwn.cdc.gov/nchs/nhanes}.

\phantomsection\label{supplementary-material}
\bigskip

\begin{center}

{\large\bf SUPPLEMENTARY MATERIAL}

\end{center}

\begin{description}
\item[Supplementary material:] A newly proposed truncated geometric mechanism, algorithms for the proposed methods in the main text, additional simulation results, and additional analyses of the dietary data. (.pdf file)
\item[Source code:] simulation code and plotting functions. (zipped tar file)
\end{description}

\bibliography{bibliography.bib}

\end{document}


\def\spacingset#1{\renewcommand{\baselinestretch}%
{#1}\small\normalsize} \spacingset{1}


\if1\anon
{
  \title{\bf Supplementary Material for ``Differentially Private Bayesian Inference for Gaussian Copula Correlations''}
  \author{Shuo Wang\thanks{
    The authors gratefully acknowledge NSF-SES-2217456.}\hspace{.2cm}\\
    Department of Statistical Science, Duke University\\
    and \\
    Joseph Feldman \\
    Department of Statistics and Data Science, Washington University in Saint Louis\\ 
    and \\ 
    Jerome P. Reiter \\ Department of Statistical Science, Duke University}
  \maketitle
} \fi

\if0\anon
{
  \bigskip
  \bigskip
  \bigskip
  \begin{center}
    {\LARGE\bf Supplementary Material for Differentially Private Bayesian Inference for Gaussian Copula Correlations}
\end{center}
  \medskip
} \fi

This document contains supplementary material to the main text. Section \ref{sec:RGM} describes the additional variation of the truncated geometric mechanism referred to in Section 4.2 of the main text.  Section \ref{sec:algo} provides algorithmic summaries of Bayes-NA and MLE-NN.
Section \ref{sec:moresims} includes additional simulation results. Section \ref{sec:real_data} includes additional results for the analysis of the dietary data.   Section \ref{sec:orthant} contains a derivation of the cell probabilities that appear in (7) and (8) in the main text.

\section{Additional Truncated Geometric Mechanism}\label{sec:RGM}

Section~4.2 of the main text describes the DP algorithm BTGM, which ensures that a noisy count is bounded within analyst-specified bounds. We use BTGM in the MLE-NN algorithm. In this section, we present an alternative DP algorithm that ensures a noisy count falls within analyst-specified bounds. We describe the mechanism presuming that the bounds are generically $L$ and $U$.

The standard truncated geometric mechanism (abbreviated as TGM) and the BTGM mechanism described in the main text concentrate the probability mass of out-of-bounds values onto two points within $[L,U]$.  This leads to two probability spikes at the boundaries, as shown in Figure \ref{fig:mechanism}. An alternative algorithm is  to redistribute the out-of-range probability mass across the interval $[L,U]$ by normalization. We call this algorithm the renormalized geometric mechanism, which we abbreviate as RGM.

\begin{definition}[Renormalized geometric mechanism]
For any dataset $D$, let $M(D)$ be a true count with $\ell_1$-sensitivity $\Delta$ as defined in Definition 2.2 in the main text. Let $\epsilon'>0$ denote the solution to the equation,
\begin{equation}
    \epsilon' + \log(g(\epsilon')) = \epsilon,
\end{equation} 
where
\begin{equation}
g(\epsilon') = 
\frac{1+\alpha'-(\alpha')^{d+1}-(\alpha')^{U-L+1-d}}
{1-(\alpha')^{U-L+1}}, 
\quad
d=\min\left\{\Delta, \left\lceil\frac{U-L}{2}\right\rceil\right\},
\quad
\alpha' = e^{-\epsilon' / \Delta}.
\end{equation}
The renormalized geometric mechanism outputs $\wt{M}_{\mathrm{RGM}}=M(D)+\delta$, where $\delta\in\mathbb{Z}$ is a draw from the truncated double-geometric distribution, 
\begin{equation}\label{eq:renormGeom}
    \Pr(\delta = k) = 
    \frac{e^{-|k| \epsilon' / \Delta }}
    {\sum_{l=L-M}^{U-M}e^{- |l|\epsilon' / \Delta }},
    \quad \text{for } k = L-M, L-M+1, \dots, U-M,
\end{equation}
where we use $M$ to stand for $M(D)$.
\end{definition}

\begin{theorem}
    The renormalized geometric mechanism satisfies $\epsilon$-DP.
\end{theorem}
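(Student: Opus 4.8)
The plan is to verify Definition \ref{def:DP} directly for the output distribution of the renormalized geometric mechanism, exploiting the fact that the choice of the internal parameter $\epsilon'$ is precisely calibrated so that the worst-case privacy loss equals $\epsilon$. First I would fix neighboring datasets $D \sim D'$ and write $M = M(D)$, $M' = M(D')$, so that $|M - M'| \le \Delta$ by Definition \ref{mechanism:sensitivity}. The mechanism outputs a value in $\{L, L+1, \dots, U\}$ in both cases, so it suffices to bound the ratio $\Pr(\wt M_{\mathrm{RGM}} = y \mid D)/\Pr(\wt M_{\mathrm{RGM}} = y \mid D')$ for each $y \in \{L, \dots, U\}$. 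Writing $\alpha' = e^{-\epsilon'/\Delta}$ and $Z(M) = \sum_{l=L-M}^{U-M} \alpha'^{|l|} = \sum_{y=L}^{U} \alpha'^{|y-M|}$ for the normalizing constant, the ratio is
\begin{equation}
\frac{\Pr(\wt M_{\mathrm{RGM}} = y \mid D)}{\Pr(\wt M_{\mathrm{RGM}} = y \mid D')}
= \alpha'^{|y-M| - |y-M'|} \cdot \frac{Z(M')}{Z(M)}
\le \alpha'^{-|M-M'|} \cdot \frac{Z(M')}{Z(M)}
\le \alpha'^{-\Delta} \cdot \max_{M, M': |M-M'| \le \Delta} \frac{Z(M')}{Z(M)},
\end{equation}
using the reverse triangle inequality $\big||y-M| - |y-M'|\big| \le |M-M'|$ for the first factor.

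The crux is then to show that $\max_{M,M' : |M-M'|\le\Delta} Z(M')/Z(M) = g(\epsilon')$, after which the calibration equation $\epsilon' + \log g(\epsilon') = \epsilon$ gives $\alpha'^{-\Delta} g(\epsilon') = e^{\epsilon'} g(\epsilon') = e^{\epsilon}$ and completes the proof. For this I would analyze $Z(M) = \sum_{y=L}^{U} \alpha'^{|y-M|}$ as a function of $M \in \{L, \dots, U\}$: it is a symmetric, unimodal function of $M$ about the midpoint $(L+U)/2$, maximized at the center and minimized at the endpoints $M \in \{L, U\}$ (where the sum is a one-sided geometric series $\sum_{j=0}^{U-L}\alpha'^j = (1-\alpha'^{U-L+1})/(1-\alpha')$). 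The worst case for the ratio $Z(M')/Z(M)$ pushes $M$ toward an endpoint and $M'$ a distance $d = \min\{\Delta, \lceil (U-L)/2\rceil\}$ inward from that endpoint; evaluating $Z$ at these two configurations and simplifying the resulting geometric sums yields exactly the closed form $g(\epsilon') = (1+\alpha'-\alpha'^{d+1}-\alpha'^{U-L+1-d})/(1-\alpha'^{U-L+1})$ stated in the definition. The cap at $\lceil (U-L)/2\rceil$ arises because once the step would carry $M'$ past the midpoint, moving further only decreases $Z(M')$, so the extremal inner point is the one closest to the center reachable within $\Delta$ steps.

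The main obstacle I anticipate is the monotonicity/unimodality argument for $Z(M)$ together with the identification of the precise worst-case pair $(M, M')$ — this requires care because one must argue simultaneously that $Z(M)$ is as small as possible and $Z(M')$ as large as possible subject to the constraint $|M - M'| \le \Delta$, and that these two desiderata are jointly achieved by the endpoint/inward-point configuration rather than traded off. A clean way to handle this is to show $Z(M+1) - Z(M)$ has a constant sign on each side of the midpoint (it telescopes to a difference of two geometric-tail terms, $\alpha'^{U-M} - \alpha'^{M+1-L}$ up to sign), which gives unimodality immediately, and then to note that the ratio $Z(M')/Z(M)$ over feasible pairs is a discrete function on a finite grid whose maximum can be located by checking that moving either endpoint toward its extremal position does not decrease it. One should also separately handle the degenerate case $U = L$ (where the mechanism is deterministic, $g = 1$, $\epsilon' = \epsilon$, and the claim is trivial) and confirm that the calibration equation $\epsilon' + \log g(\epsilon') = \epsilon$ has a unique solution $\epsilon' \in (0, \epsilon)$, which follows since the left-hand side is continuous and strictly increasing in $\epsilon'$ (as $g$ is increasing in $\epsilon'$), equals $0$ at $\epsilon' = 0^+$ limiting value $\log 1 = 0$ appropriately, and exceeds $\epsilon$ for $\epsilon'$ near $\epsilon$.
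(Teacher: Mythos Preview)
Your proposal is correct and follows essentially the same route as the paper: bound the likelihood ratio as the product of an exponential term (controlled via the triangle inequality and $|M-M'|\le\Delta$) and the ratio of normalizing constants $Z(M')/Z(M)$, then invoke the calibration equation $\epsilon' + \log g(\epsilon') = \epsilon$. The paper's proof in fact simply asserts that $\max_{|M-M'|\le\Delta} Z(M')/Z(M) = g(\epsilon')$ after writing $Z(M)=(1+\alpha'-\alpha'^{M-L+1}-\alpha'^{U+1-M})/(1-\alpha')$ in closed form, so your unimodality sketch and edge-case remarks are actually more detailed than the original on that step.
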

\begin{proof}
    For any neighboring datasets $D$ and $D'$, denote $M(D')$ by $M'$ and $\wt{M}_{\mathrm{RGM}}(D')$ by $\wt{M}_{\mathrm{RGM}}'$. It suffices to show that $\Pr(\wt{M}_{\mathrm{RGM}} = s)\leq e^{\epsilon}\Pr(\wt{M}_{\mathrm{RGM}}' = s)$ for any $s\in\{L, L+1, \dots, U\}$. By definition of global sensitivity, $|M-M'|\leq \Delta$. Hence,
    
    \begin{align}
        \frac{\Pr(\wt{M}_{\mathrm{RGM}} = s)}{\Pr(\wt{M}_{\mathrm{RGM}}' = s)} 
        &= 
        \frac{e^{-\frac{\epsilon'}{\Delta}  |s-M|}}{e^{-\frac{\epsilon'}{\Delta}  |s-M'|}}
        \frac{\sum_{l=L}^{U}e^{- \frac{\epsilon'}{\Delta}  |l-M'|}}{\sum_{l=L}^{U}e^{-\frac{\epsilon'}{\Delta}  |l-M|}}\\
        &=
        e^{-\frac{\epsilon'}{\Delta}(|s-M| - |s-M'|)} 
        \left(
        \frac{1+\alpha'-(\alpha')^{M'-L+1}-(\alpha')^{U+1-M'}}
        {1+\alpha'-(\alpha')^{M-L+1}-(\alpha')^{U+1-M}}
        \right)\\
        &\leq
        e^{\epsilon' \frac{|M-M'|}{\Delta}} 
        \left(
        \frac{1+\alpha'-(\alpha')^{M'-L+1}-(\alpha')^{U+1-M'}}
        {1+\alpha'-(\alpha')^{M-L+1}-(\alpha')^{U+1-M}}
        \right)\\
        &\leq
        e^{\epsilon'} 
        g(\epsilon') =
        e^{\epsilon'+\log g(\epsilon')} =e^\epsilon.
    \end{align}
    
Here, the inequality in (S.1.7) holds because 
$$
\max_{\substack{M,M' \in [L,U]\\ |M-M'|\le \Delta}}
\frac{1+\alpha'-(\alpha')^{M'-L+1}-(\alpha')^{U+1-M'}}
        {1+\alpha'-(\alpha')^{M-L+1}-(\alpha')^{U+1-M}}
= \frac{1+\alpha'-(\alpha')^{d+1}-(\alpha')^{U-L+1-d}}
{1-(\alpha')^{U-L+1}}
= g(\epsilon').
$$
\end{proof}

\begin{figure}

\centering{

\includegraphics[width=\linewidth]{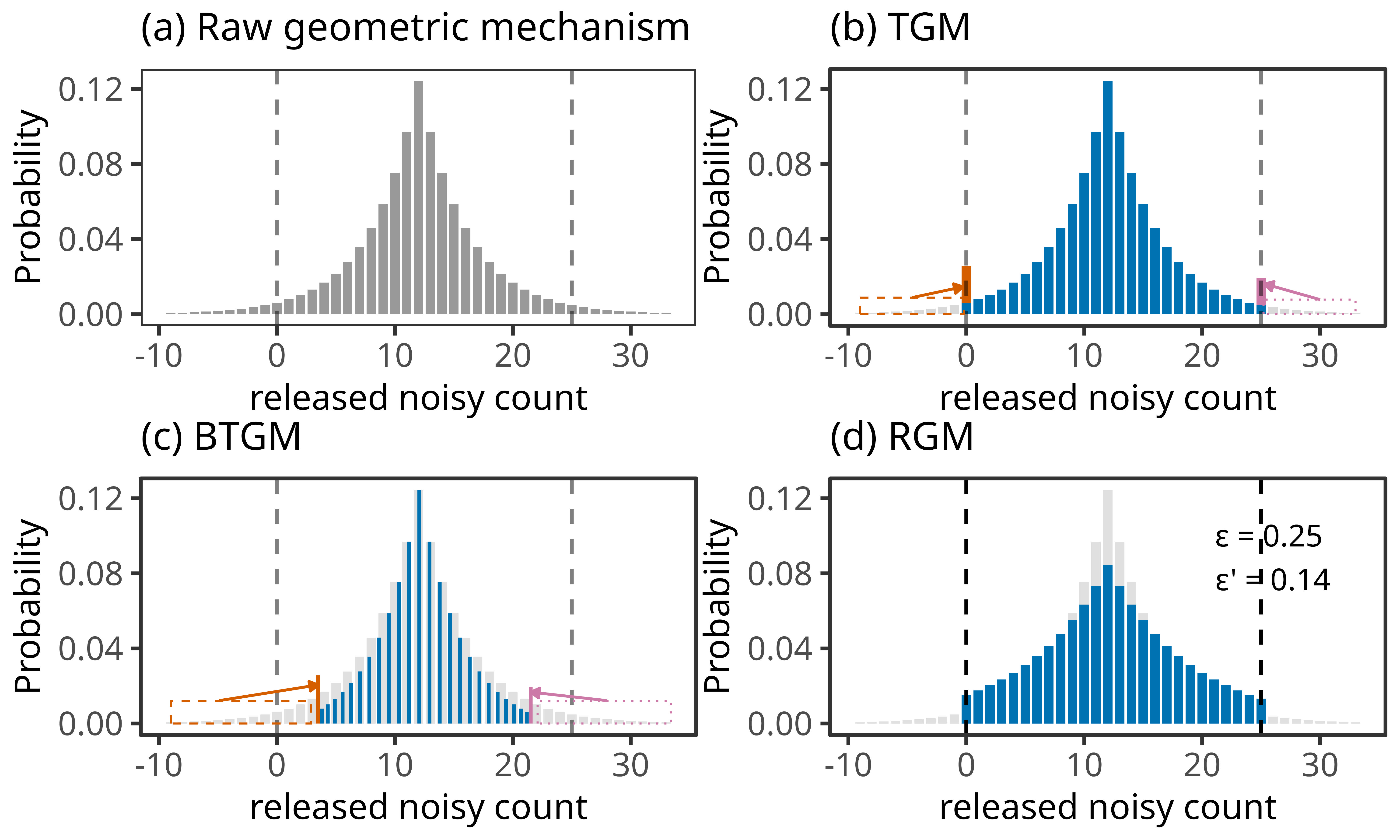}

}

\caption{\label{fig:mechanism}Comparison of three range-preserving mechanisms for $L=0,U=25,M=12,\epsilon=0.25$, and $\Delta=1$. Vertical dashed lines indicate the boundaries of the feasible range $[0,25]$. The light-gray bars show the probability distribution of the outputs from the raw geometric mechanism, while the blue foreground bars show those of the three truncated geometric mechanisms. Orange dashed rectangles with arrows (left) and pink dotted rectangles with arrows (right) indicate how out-of-range mass on each side is remapped back into $[0,25]$. The annotation $\epsilon' = 0.14$ in panel~(d) is defined as the solution to $\epsilon' + \log g(\epsilon') = \epsilon$ in the RGM definition above.}

\end{figure}

Figure \ref{fig:mechanism} illustrates the three mechanisms RGM, BTGM, and TGM. It also sheds some light on which of these three mechanisms might be preferred for particular problem settings. TGM and BTGM remap the out-of-range mass to two within-range points while leaving the other probabilities unchanged. BTGM additionally applies a slight inward contraction to all within-range values.
RGM, in contrast, discards mass outside $[L,U]$ and renormalizes the remaining distribution. Among the three, BTGM produces the most concentrated distribution; however, it does not guarantee integer-valued outputs. Unlike TGM and BTGM, RGM does not exhibit large boundary spikes that can cause distortions and overly concentrated outputs when $\epsilon$ is small. However, RGM has a relatively large and undesirable variance for modest $\epsilon$. When $\epsilon$ is set so that the geometric mechanism itself (with no truncation) seldom generates out-of-range values, all three mechanisms yield nearly identical distributions.

\section{Algorithms and Implementation Details}
\label{sec:algo}

This section presents the algorithms for the two methods presented in the main text, Bayes-NA and MLE-NN. Section~\ref{sec:algo_privacy} introduces the algorithm used to add noise to the counts to ensure differential privacy. Section~\ref{sec:algo_BayesNA} and Section~\ref{sec:algo_MLENN} detail the algorithms for Bayes-NA and MLE-NN, respectively, which take the noisy counts and turn them into estimates for the copula correlations.

\subsection{Algorithm to Ensure Differential Privacy}\label{sec:algo_privacy}

This subsection provides the algorithmic details corresponding to Section~3.2 of the main text.

\begin{algorithm}[H]
\caption{Calculate Differentially Private Statistics}
\label{alg:DPstatistic}
\textbf{Input:} Data matrix $D = \{x_{ij}\} \in \mathbb{R}^{n \times p}$; total privacy budget $\epsilon$\\
\textbf{Output:} Noisy co-occurrence statistics $\{\widetilde{t}_{jj'}: 1 \le j < j' \le p\}$

\begin{enumerate}
    \item Compute $t_{jj'} = \sum_{i=1}^n \mathbb{I}(x_{ij} \geq \operatorname{med}(X_j), x_{ij'} \geq \operatorname{med}(X_{j'}))$ for each $1 \le j < j' \le p$.

    \item For each $t_{jj'}$, apply a differentially private mechanism with sensitivity 1 and pair-wise privacy budget of $\frac{2\epsilon}{p(p-1)}$, to draw $\widetilde{t}_{jj'}$:
    \begin{enumerate}
        \item If Bayes-NA is implemented, use geometric mechanism.
        \item If MLE-NN is implemented, apply a range-preserving mechanism to ensure $\widetilde{t}_{jj'}\in[0,\frac{n}{2}]$. Choices include TGM, BTGM, or RGM.
    \end{enumerate}
\end{enumerate}
\end{algorithm}

\subsection{Algorithm for Bayesian Estimation of Copula Correlation}
\label{sec:algo_BayesNA}

This subsection provides the algorithmic details corresponding to Section~3.3 of the main text.

\begin{algorithm}[H]
\caption{Estimate $R$ using Bayesian noise-aware method (Bayes-NA)}
\label{alg:Bayes}
\textbf{Input:} $\{\widetilde{t}_{jj'}\in\mathbb{Z}: 1 \le j < j' \le p\}$ output from Algorithm~\ref{alg:DPstatistic}; significance level $\alpha\in(0,1)$.\\
\textbf{Output:} Differentially private correlation matrix $\widetilde{R}_{\mathrm{Bayes}}$ and element-wise $(1-\alpha)$-credible intervals.

\begin{enumerate}
    \item Draw posterior samples $\{R^{(s)}: s=1, \dots, S\}$ from the posterior distribution,
    \begin{equation*}
    p(R\mid \widetilde{T}_{12}=\widetilde{t}_{12},\dots,\widetilde{T}_{p-1,p}=\widetilde{t}_{p-1,p}) 
    \propto 
    \pi(R)
    \prod_{1\leq j<j'\leq p}
    \Pr(\widetilde{T}_{jj'}=\tilde t_{jj'}\mid R_{jj'}),
    \end{equation*}
    using \textsf{Stan}, where the prior $\pi(R)$ is chosen as LKJ(1).
    
    \item Compute the point estimate, 
    \begin{equation*}
        \widetilde{R}_{\mathrm{Bayes}}=\frac{1}{S}\sum_{s=1}^S R^{(s)},
    \end{equation*}
    and element-wise $(1-\alpha)$ credible interval $[\tilde q_{\alpha/2}^{jj'}, \tilde q_{1-\alpha/2}^{jj'}]$, where $\tilde q_{\alpha/2}^{jj'}$ and $\tilde q_{1-\alpha/2}^{jj'}$ denote the empirical $\alpha/2$ and $(1-\alpha/2)$ quantiles of $\{R_{jj'}^{(s)}: s=1, \dots, S\}$, with $R_{jj'}^{(s)}$ being the $(j,j')$-th element of $R^{(s)}$.
\end{enumerate}
\end{algorithm}

\subsection{Algorithm for MLE of Copula Correlation}
\label{sec:algo_MLENN}

This subsection provides the algorithmic details corresponding to Section~4 of the main text.

\begin{algorithm}[H]
\caption{Estimate $R$ using MLE noise-naive method (MLE-NN)}
\label{alg:MLE}
\textbf{Input:} $\{\widetilde{t}_{jj'}\in[0,\frac{n}{2}]: 1 \le j < j' \le p\}$ output from Algorithm~\ref{alg:DPstatistic}. \\
\textbf{Output:} Differentially private correlation matrix $\widetilde{R}_{\mathrm{MLE}}$.

\begin{enumerate}
    \item For each $1 \le j < j' \le p$, compute $\hat R_{jj'}$ by solving 
    \begin{equation*}
\label{eq:MLE}
   \widetilde{t}_{jj'} =  \left(\sum_{t=0}^{n/2} \binom{n/2}{t}^2  \left( \frac{\pi + 2 \arcsin(\hat{R}_{jj'})}{\pi - 2 \arcsin(\hat{R}_{jj'})} \right)^{2t}\right)^{-1} \left(\sum_{t=0}^{n/2} t  \binom{n/2}{t}^2  \left( \frac{\pi + 2 \arcsin(\hat{R}_{jj'})}{\pi - 2 \arcsin(\hat{R}_{jj'})} \right)^{2t}\right)
\end{equation*}

    using a root-finding algorithm such as bisection.
    
    \item Let $\widehat{R}=(\hat R_{jj'})_{1\leq j,j'\leq p}$. Use Higham's algorithm to solve 
        \begin{equation*}
            \widetilde{R}_{\mathrm{MLE}} = 
            \arg\min_{R \in \mathcal{C}_p}
            \bigl\| R - \widehat{R} \bigr\|_F^2,
            \text{ where }
            \mathcal{C}_p = \bigl\{ R \in \mathbb{R}^{p \times p} :\ R = R^\top,\ R \succeq \mathbf{0},\ \operatorname{diag}(R) = 1 \bigr\}.
        \end{equation*}
    Note that if $\widehat{R}$ is PSD, $\widetilde{R}_{\mathrm{MLE}}=\widehat{R}$.
\end{enumerate}
\end{algorithm}

\section{Additional Simulation Results}\label{sec:moresims}

This section includes additional simulation results.   Section \ref{sec:discrete} provides simulation results when using marginal distributions with many ties, i.e., discrete marginals.  Other details of the simulation settings correspond to those in Sections 5.1 and 5.2 of the main text. Section~\ref{sec:coverage_binwise} provides empirical coverage rates of the 95\% credible intervals presented in Section~5.3 of the main text. Section~\ref{sec:additional_mechanism} compares the MAEs when using MLE-NN with the three truncated geometric mechanisms described in Section \ref{sec:RGM}.   As in the simulations in the main text, for these simulation studies, all measurements are assessed across 1000 independent runs.  Section~ \ref{sec:cl_vs_joint} compares the posterior distributions computed with the composite likelihood and the full joint likelihood when $p=3$.  This dimension is  small enough that the full joint likelihood is computationally feasible to work with.

\subsection{Simulation Results for Discrete Marginals}
\label{sec:discrete}

To assess the effect of lexicographic tie-breaking introduced in Section~3.2 of the main text, we use simulations with $p=2$ discrete marginal distributions. The simulated data are generated from a Gaussian copula with 
\begin{equation*}
    R=\begin{pmatrix}1 & r \\ r & 1\end{pmatrix},
\end{equation*}
where $r\sim\operatorname{Unif}(-1,1)$. This is equivalent to the scaled Wishart distribution in the main text. One variable has an ordinal marginal distribution taking values in $\{1,2,3\}$ with probabilities of
$(0.3, 0.3, 0.4)$, respectively.  The second variable has  
an ordinal distribution taking values in $\{1,2,3,4,5,6\}$ with probabilities
    $(0.2, 0.1, 0.2, 0.2, 0.2, 0.1)$, respectively.

Figure~\ref{fig:MAE_discrete} displays the MAEs for the methods described in the main text when $\epsilon\in\{0.01,0.1,1\}$. 
Bayes-NA and MLE-NN attain the lowest MAEs, generally tracking each other closely. Even though the marginals are discrete with only a small number of levels, Bayes-NA and MLE-NN continue to perform well. These approaches rely only on binary comparisons with respect to the median and  therefore can be relatively insensitive to the number of discrete levels. In contrast, when the number of discrete levels is small, the Li-Kendall method tends to perform worse. As shown in Figure~\ref{fig:MAE_discrete}, for $\epsilon = 1$, even with large sample sizes, Li-Kendall does not reduce the MAE below 0.05, whereas Bayes-NA and MLE-NN achieve lower errors. 

As analyzed in the main text, the primary factor affecting Bayes-NA and MLE-NN is the number of ties at the median, since lexicographic tie-breaking for these observations may distort correlation estimation. However, as demonstrated in these simulations, when the proportion of median ties is not excessively large (e.g., 30\% in our setting), Bayes-NA and MLE-NN still can perform satisfactorily.

\begin{figure}[t]

\centering{

\includegraphics[width=\linewidth]{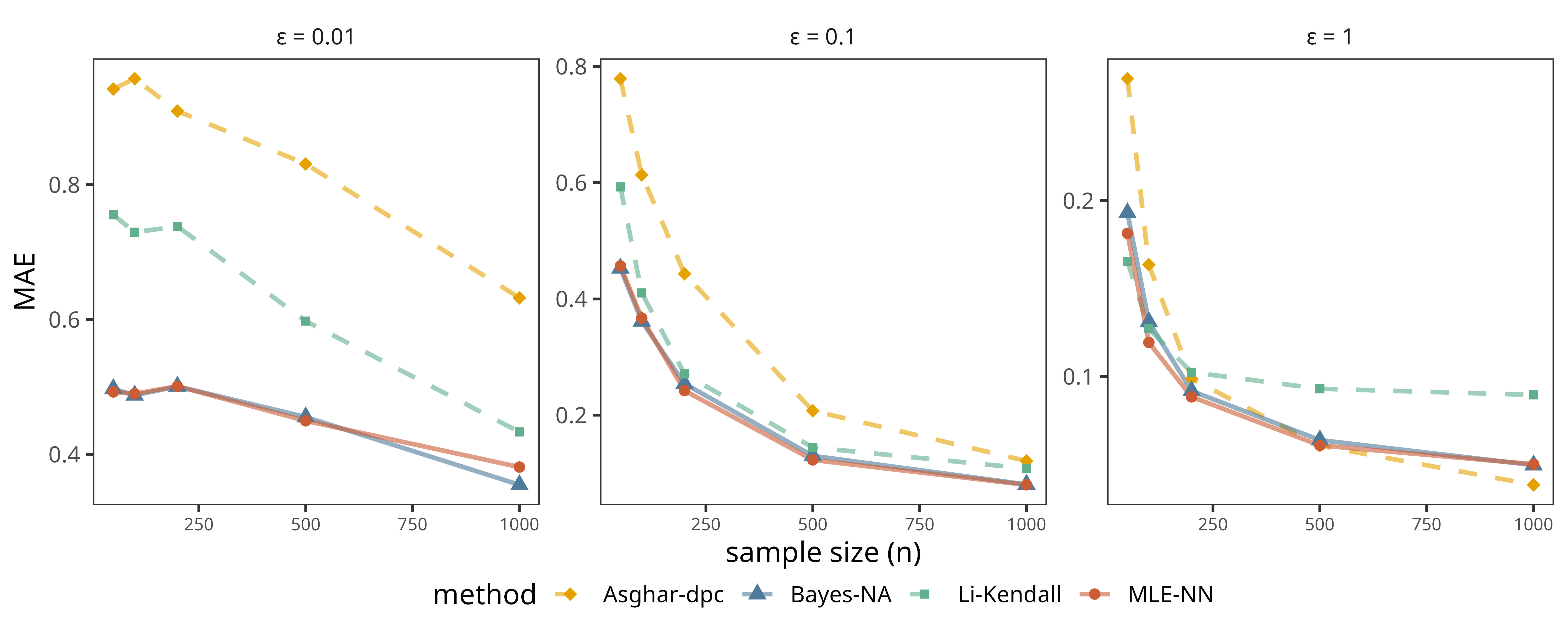}

}

\caption{\label{fig:MAE_discrete}MAEs of $\hat R_{12}$ when $p=2$ and  $n \in \{50, 100, 200, 500, 1000\}$ with varying $\epsilon$. Results based on 1000 simulation runs.  The maximum Monte Carlo error is 0.02. Solid lines: Bayes-NA, MLE-NN. Dashed lines: Li-Kendall, Asghar-dpc.}

\end{figure}

\subsection{Additional Detail on Coverage Rates when \texorpdfstring{$p=5$}{p=5}} 
\label{sec:coverage_binwise}

In Section 5.3 of the main text, we mention that Bayes-NA can experience lower than nominal coverage rates for $R_{jj'}$ near $\pm1$.   Figure~\ref{fig:coverage_pairwise} provides supporting evidence for this claim for the simulations where $p=5$.
We observe lower than nominal coverage rates for $R_{jj'}\approx\pm 1$ when both $n$ and $\epsilon$ are small. As discussed in the main text, when $\epsilon$ and $n$ are small, the posterior distribution is dominated by the prior distribution, and the positive semi-definite constraints force the LKJ prior for $p=5$ to assign little probability mass to correlations near $\pm 1$. Consequently, the posterior samples do not explore regions near the boundaries adequately. When either $n$ or $\epsilon$ becomes sufficiently large, the likelihood is the main contributor to the posterior distribution, and the credible intervals achieve satisfactory empirical coverage even for correlations near the boundaries. 

\begin{figure}[t]

\centering{

\includegraphics[width=\linewidth]{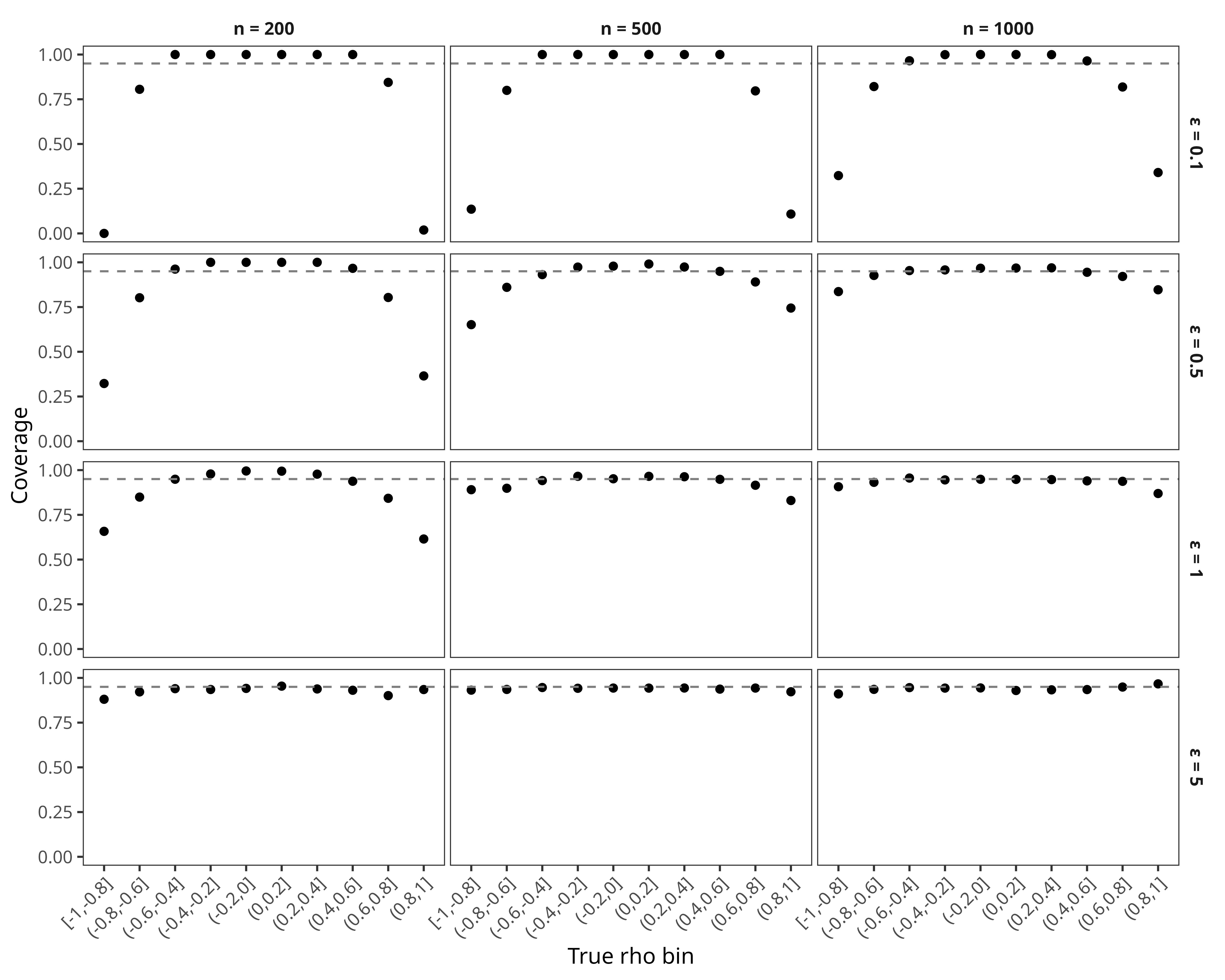}

}

\caption{\label{fig:coverage_pairwise}Empirical coverage rates of the 95\% credible intervals produced by Bayes-NA when $p=5$, computed within bins of $R_{jj'}$ of width 0.2. Panels correspond to different sample sizes $n$ and privacy budgets $\epsilon$. The dashed horizontal line indicates the nominal 95\% coverage level.}

\end{figure}

\subsection{Comparison of Different Truncated Geometric Mechanisms}
\label{sec:additional_mechanism}

In this section, we implement MLE-NN using each of the three truncated DP mechanisms, namely TGM and BTGM described in Section 4.2 of the main text and RGM described in Section~\ref{sec:RGM}.

Figure~\ref{fig:MAE_other} displays the MAEs for the three truncated geometric mechanisms, along with the MAEs of Bayes-NA. When both $\epsilon$ and $n$ are large, all methods converge to similarly small MAE values. In other cases, MLE-NN with BTGM tends to outperform MLE-NN with TGM or RGM for the settings considered in our simulations. All MLE-NN variants, regardless of the truncated geometric mechanism used, perform slightly worse than Bayes-NA. However, the differences in MAEs are small.

\begin{figure}

\centering{

\includegraphics[width=\linewidth]{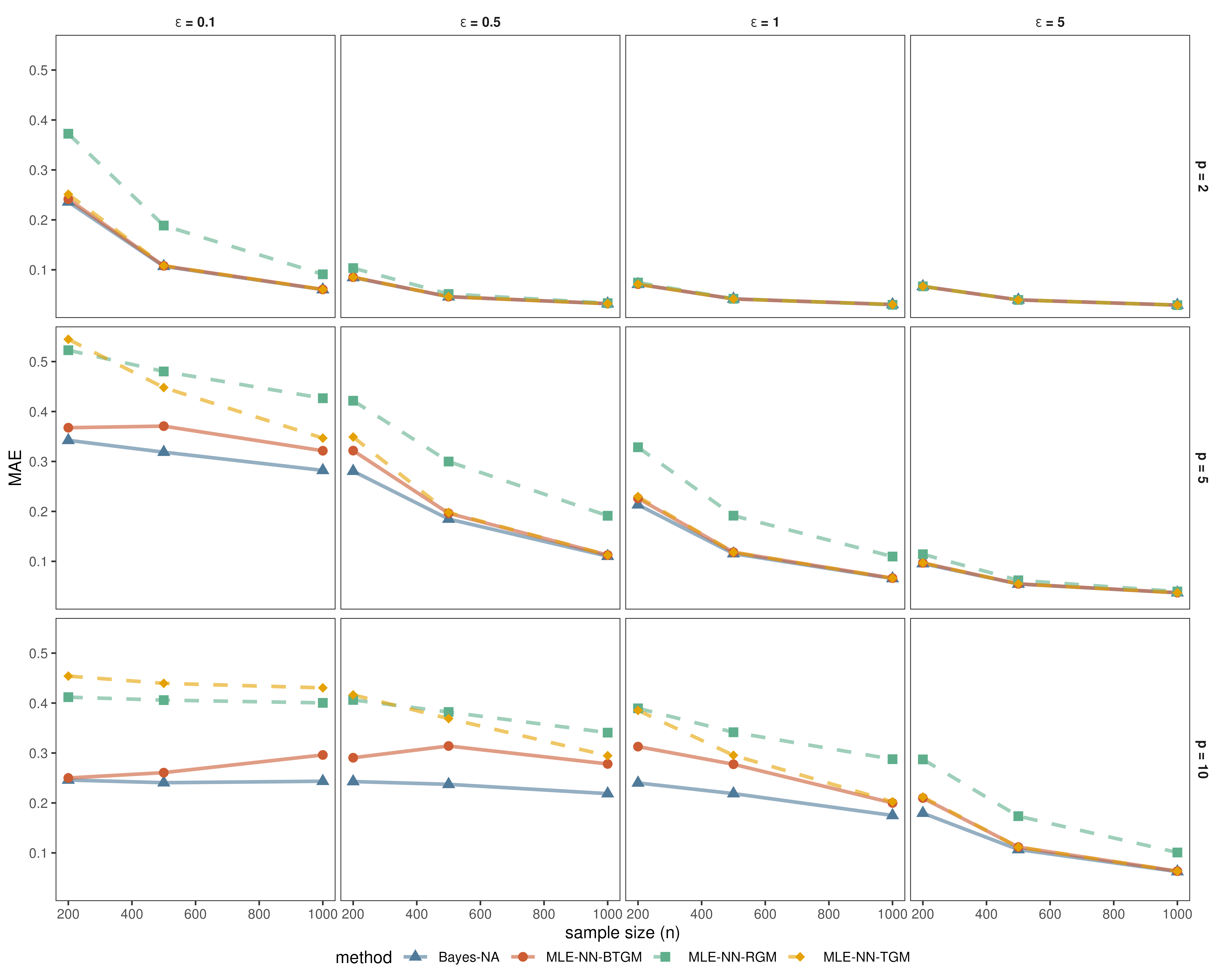}

}

\caption{\label{fig:MAE_other}MAEs of $\hat R$ for settings with $n \in \{200, 500, 1000\}$ with varying $\epsilon$ and $p$. Solid lines: Bayes-NA and MLE-NN with BTGM. Dashed lines: MLE-NN with RGM and MLE-NN with TGM.}

\end{figure}

\subsection{Comparison of Composite Likelihood and Full Joint Likelihood} \label{sec:cl_vs_joint}

The composite likelihood (CL) in (18) of the main text is used as an approximation for the intractable joint likelihood
$\Pr(\widetilde T_{12}, \dots, \widetilde T_{p-1, p} \mid R)$. 
Here we assess the accuracy of this approximation empirically for $p=3$, which is small enough dimension that the joint probability mass function is computationally tractable. After
conditioning on the column totals, the total-sample constraint, and the three
pairwise constraints, only one inner cell count is unspecified, so the joint
probability mass function reduces to a one-dimensional sum that can be evaluated in closed form. Specifically, we have

\begin{equation}\label{eq:joint_p3}
\begin{aligned}
&\Pr(\widetilde{T}_{12} = \widetilde{t}_{12},\,\widetilde{T}_{13} = \widetilde{t}_{13},\,\widetilde{T}_{23} = \widetilde{t}_{23} \mid R) \\
&\quad = \sum_{t_{12},\, t_{13},\, t_{23}}
   \Pr(\widetilde{T}_{12} = \widetilde{t}_{12},\,\widetilde{T}_{13} = \widetilde{t}_{13},\,\widetilde{T}_{23} = \widetilde{t}_{23} \mid T_{12} = t_{12},\, T_{13} = t_{13},\, T_{23} = t_{23})\\
&\qquad\quad \times \Pr(T_{12} = t_{12},\, T_{13} = t_{13},\, T_{23} = t_{23} \mid R) \\
&\quad = \sum_{t_{12},\, t_{13},\, t_{23}}
   \prod_{1 \le j < j' \le 3}
     \Pr(\widetilde{T}_{jj'} = \widetilde{t}_{jj'} \mid T_{jj'} = t_{jj'})\\
&\qquad\quad \times \Pr(T_{12} = t_{12},\, T_{13} = t_{13},\, T_{23} = t_{23} \mid R) \\
&\quad = \sum_{t_{12},\, t_{13},\, t_{23}}
   \prod_{1 \le j < j' \le 3}
     \Pr(\widetilde{T}_{jj'} = \widetilde{t}_{jj'} \mid T_{jj'} = t_{jj'})\\
&\qquad\quad \times
   \sum_{k = k_{\min}(t)}^{k_{\max}(t)}
     \Pr(T_{12} = t_{12},\, T_{13} = t_{13},\, T_{23} = t_{23},\, n_{111} = k \mid R),
\end{aligned}
\end{equation}
where $\Pr(\widetilde{T}_{jj'} = \widetilde{t}_{jj'} \mid T_{jj'} = t_{jj'})$ is the double-geometric probability mass function in (3) of the main text, $k = n_{111} = \sum_{i=1}^{n} \mathbb{I}(z_{i1} \geq 0, z_{i2} \geq 0, z_{i3} \geq 0)$ is the single free inner cell count, and $k_{\min}(t)$, $k_{\max}(t)$ are the cell-non-negativity bounds on $k$. The last probability $\Pr(T_{12} = t_{12},\, T_{13} = t_{13},\, T_{23} = t_{23},\, n_{111} = k \mid R)$ can be derived similarly as (10) of the main text by conditioning the multinomial distribution of the eight $2\times 2\times 2$ cell counts on the column totals being $n/2$, with the trivariate cell probabilities obtained by an argument analogous to Lemma~\ref{lem:orthant}.

Given the closed form in \eqref{eq:joint_p3}, we obtain the joint posterior by importance-sampling reweighting of the composite likelihood Bayes-NA posterior on the same noisy dataset. Each draw $R^{(i)}$ from the proposal receives weight
\begin{equation*}
w_{i} \;\propto\; \frac{\Pr(\widetilde{\mathcal{T}} = \widetilde{t} \mid R^{(i)})}{f_{\mathrm{CL}}(\widetilde{\mathcal{T}} \mid R^{(i)})},
\end{equation*}
where the numerator is the right-hand side of \eqref{eq:joint_p3} with $\widetilde{\mathcal{T}} = (\widetilde{T}_{12}, \widetilde{T}_{13}, \widetilde{T}_{23})$ and the denominator is (18) of the main text. Posterior summaries under the joint likelihood are then obtained by standard self-normalized importance-sampling estimators using these weights.

We evaluate both estimators on the grid $n \in \{50, 100, 200\}$ and $\epsilon \in \{0.1, 1\}$ with $R$ fixed at $R_{jj'} = 0.7$ for all $j \neq j'$, a setting where the CL approximation is most likely to differ from the joint. For each $(n, \epsilon)$ cell we run $100$ replicates. 
Figure~\ref{fig:estCLvsJoint} displays the two posteriors for a single replicate for each combination of $(n, \epsilon)$.  All posterior means coincide and the $95\%$ credible intervals overlap almost exactly. This result suggests that the composite-likelihood approximation in this example offers accurate marginal point and interval estimates.
We also repeat the simulation for $100$ replicates of each scenario.  As seen in Figure~\ref{fig:mae_CLvsJoint}, the average MAE of each $R_{jj'}$ is  nearly identical between the CL posterior and the full joint posterior.

While marginally the CL approximation and full posterior offer similar results, the CL approximation does introduce  some differences in the multivariate posterior distribution.   Figure~\ref{fig:corrCLvsJoint} displays the within-replicate posterior correlation between pairs of entries of $R$, averaged over the $100$ replicates. There appears to be some attenuation in the CL-estimated associations across correlations. This attenuation results from the independence assumption. 
The gap is more pronounced at $\epsilon = 1$ than at $\epsilon = 0.1$, since the larger independent geometric noise added at small $\epsilon$ partially masks the underlying cross-pair dependence and brings the noisy pairwise statistics closer to the independence assumption that CL imposes.

While there is attenuation, 
we anticipate that many analysts are likely to focus on marginal point and interval estimates for $R_{jj'}$ rather than the multivariate distribution.  An analogous situation occurs when summarizing inferences for regression coefficients: analysts mainly use marginal inferences (e.g., point estimates and standard errors) when interpreting results.  
Nonetheless, finding estimation routines to better approximate the multivariate structure is a useful area for future research.

\begin{figure}[t]

\centering{

\includegraphics[width=\linewidth]{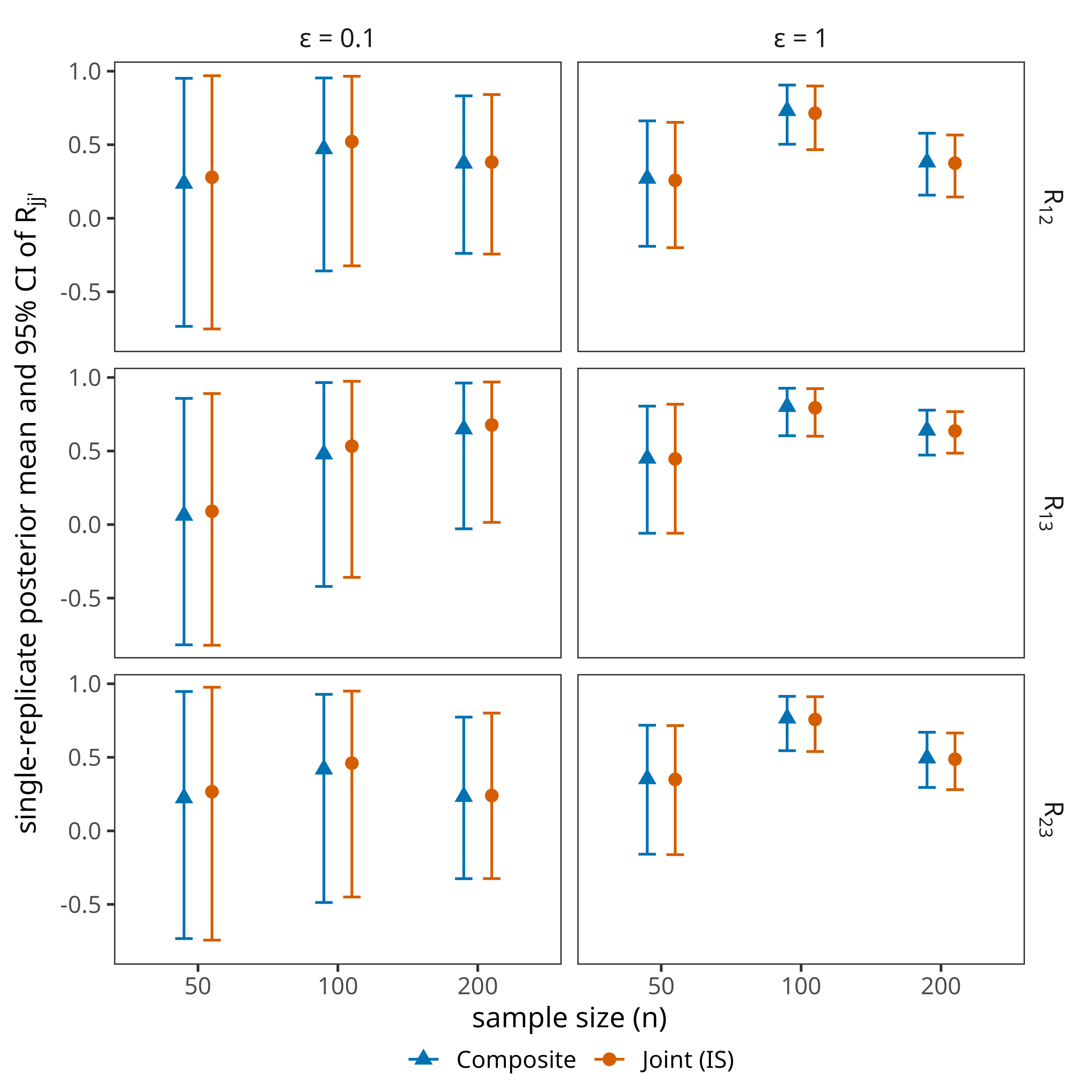}

}

\caption{\label{fig:estCLvsJoint}Posterior mean and $95\%$ credible interval for each pairwise correlation $R_{jj'}$ from a single replicate, under the composite-likelihood Bayes-NA estimator (Composite; blue triangles) and the full-joint-likelihood Bayes-NA estimator (Joint (IS); orange circles). Rows correspond to the three pairs $R_{12}, R_{13}, R_{23}$ and columns to the privacy budget $\epsilon \in \{0.1, 1\}$. The true value is $R_{jj'} = 0.7$ for all $j \ne j'$.}

\end{figure}

\begin{figure}[t]

\centering{

\includegraphics[width=\linewidth]{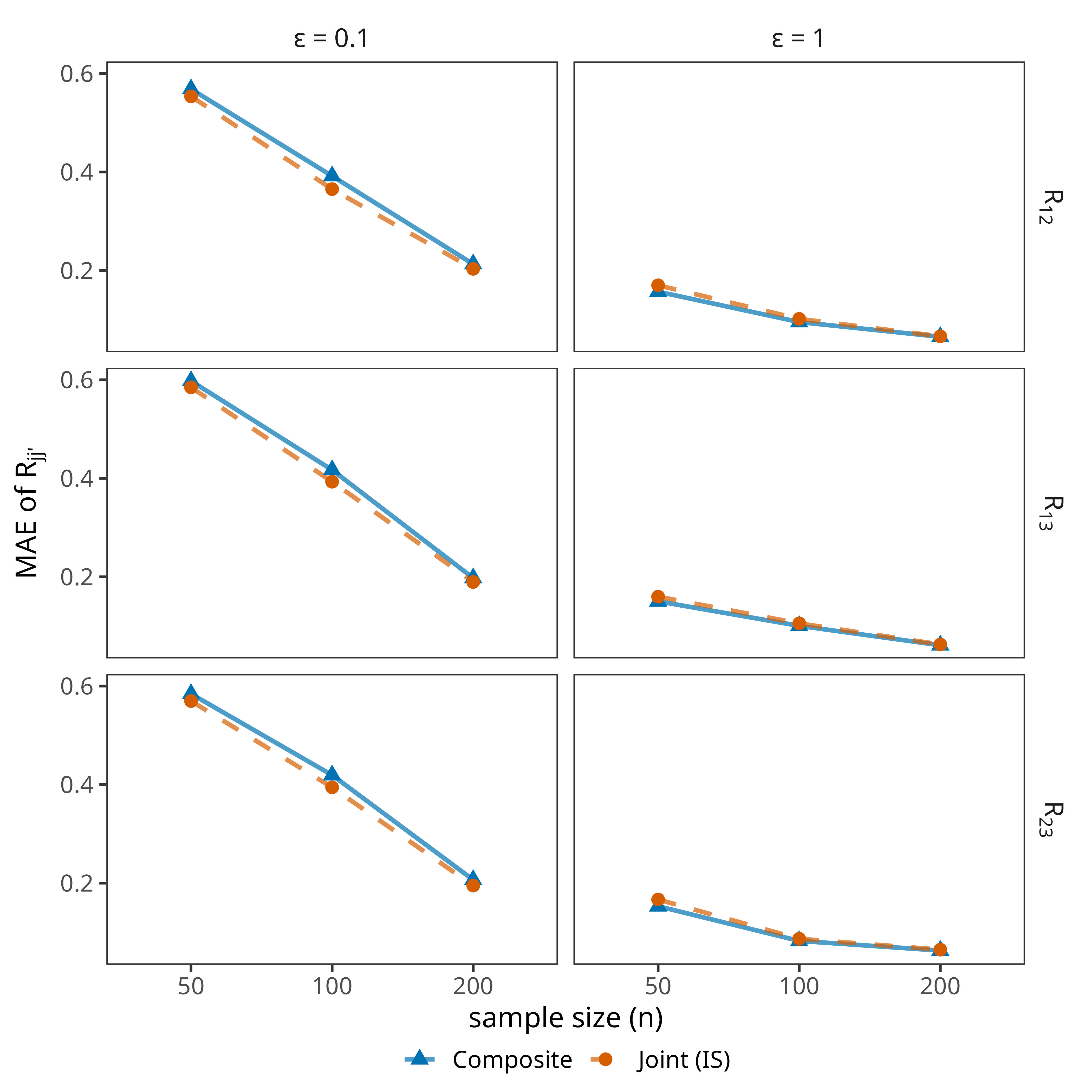}

}

\caption{\label{fig:mae_CLvsJoint}Posterior MAE of each $R_{jj'}$ under the composite-likelihood Bayes-NA estimator (Composite; blue solid line with triangles) and the full-joint-likelihood Bayes-NA estimator (Joint (IS); orange dashed line with circles). Rows correspond to the three pairs $R_{12}, R_{13}, R_{23}$ and columns to the privacy budget $\epsilon \in \{0.1, 1\}$.}

\end{figure}

\begin{figure}[t]

\centering{

\includegraphics[width=\linewidth]{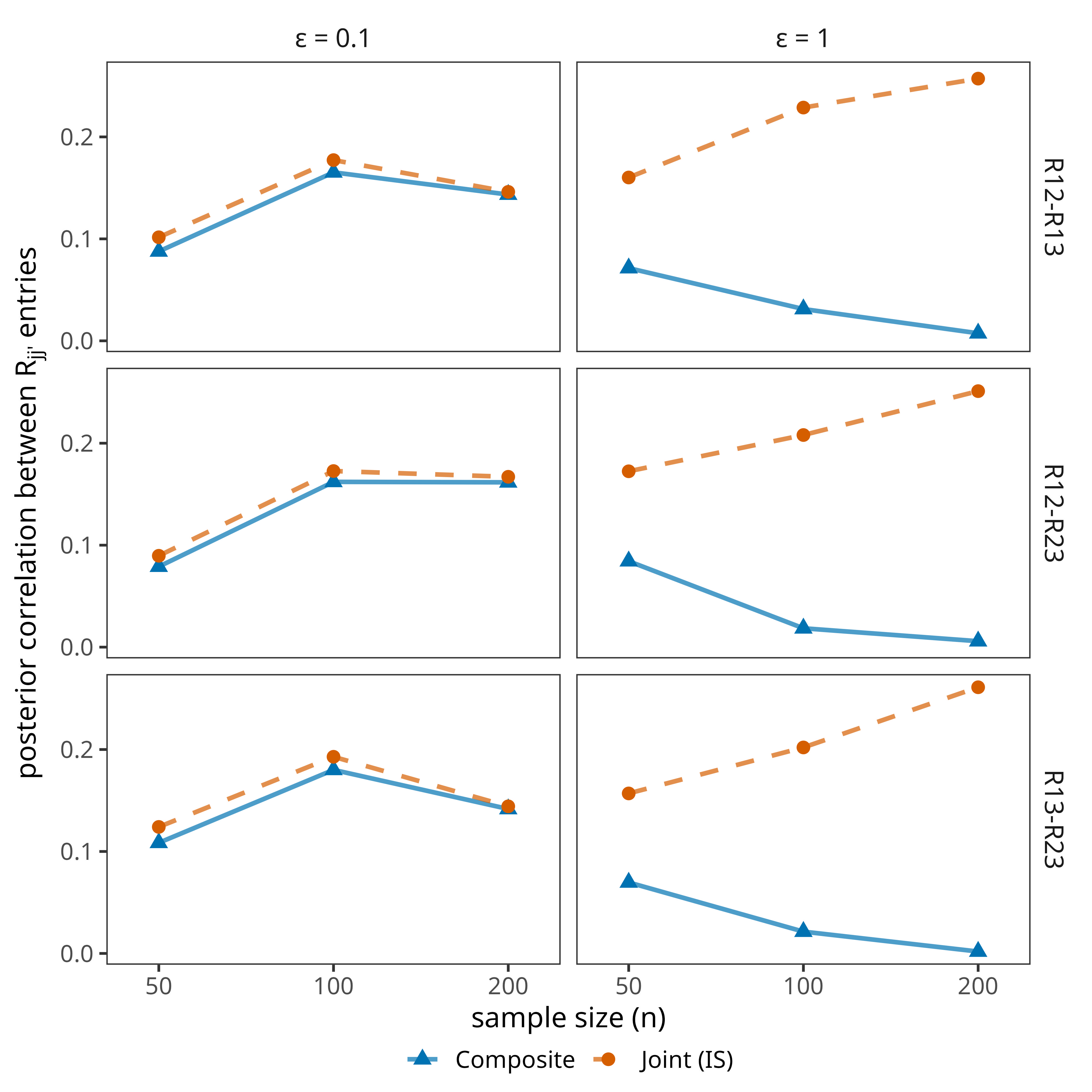}

}

\caption{\label{fig:corrCLvsJoint}Posterior correlation between pairs of entries of $R$ under the composite-likelihood Bayes-NA estimator (Composite; blue solid line with triangles) and the full-joint-likelihood Bayes-NA estimator (Joint (IS); orange dashed line with circles). Rows correspond to the three entry pairs $(R_{12}, R_{13})$, $(R_{12}, R_{23})$, $(R_{13}, R_{23})$ and columns to the privacy budget $\epsilon \in \{0.1, 1\}$.}

\end{figure}

\section{Additional Results for Analysis of Dietary Data} \label{sec:real_data}

Analysts may be interested in conditional relationships among variables.  These can be obtained from the posterior distributions of the copula correlation, as we now illustrate using the analysis of the dietary data from Section 6 of the main text.

Specifically, for each of the variables Grain, Dairy, Veg, PF, and Sugar, we make inferences for the slope of the regression of that variable on income, controlling for Kcal.  Let $j$ index one of the outcomes, e.g., Grain. In each posterior draw $s=1, \dots, S=1000$ of $R$, for variable $j$ we compute 
\begin{equation*}
\beta^{(s)}_{j}=\frac{R^{(s)}_{j,\text{Income}}-R^{(s)}_{j,\text{Kcal}} R^{(s)}_{\text{Income}, \text{Kcal}}}{1-(R^{(s)}_{\text{Income}, \text{Kcal}})^2}.
\end{equation*}
Figure~\ref{fig:coef_cond} displays the posterior mean, 2.5\% and 97.5\% quantiles based on $\{\beta^{(1)}_{j}, 
\dots, \beta^{(S)}_{j}\}$ for each variable $j$.
Conditional on Kcal, income is positively associated with the consumption of vegetables and protein foods, and negatively associated with added sugar, grain and dairy. For $\epsilon=1$, the posterior intervals are notably wider than the observed data intervals, typically overlapping zero  slightly. This is the trade-off in accuracy demanded for the stronger privacy guarantee. When $\epsilon =3$, the intervals are much narrower than when $\epsilon=1$. We conjecture that many analysts would reach very similar conclusions using the $3$-DP intervals and the non-private intervals.  Overall, an analyst may conclude from the results that, for example, individuals with higher income tend to consume calories more from vegetables and less from sugar-sweetened foods.

\begin{figure}[t]

\centering{

\includegraphics[width=0.7\linewidth,height=\textheight]{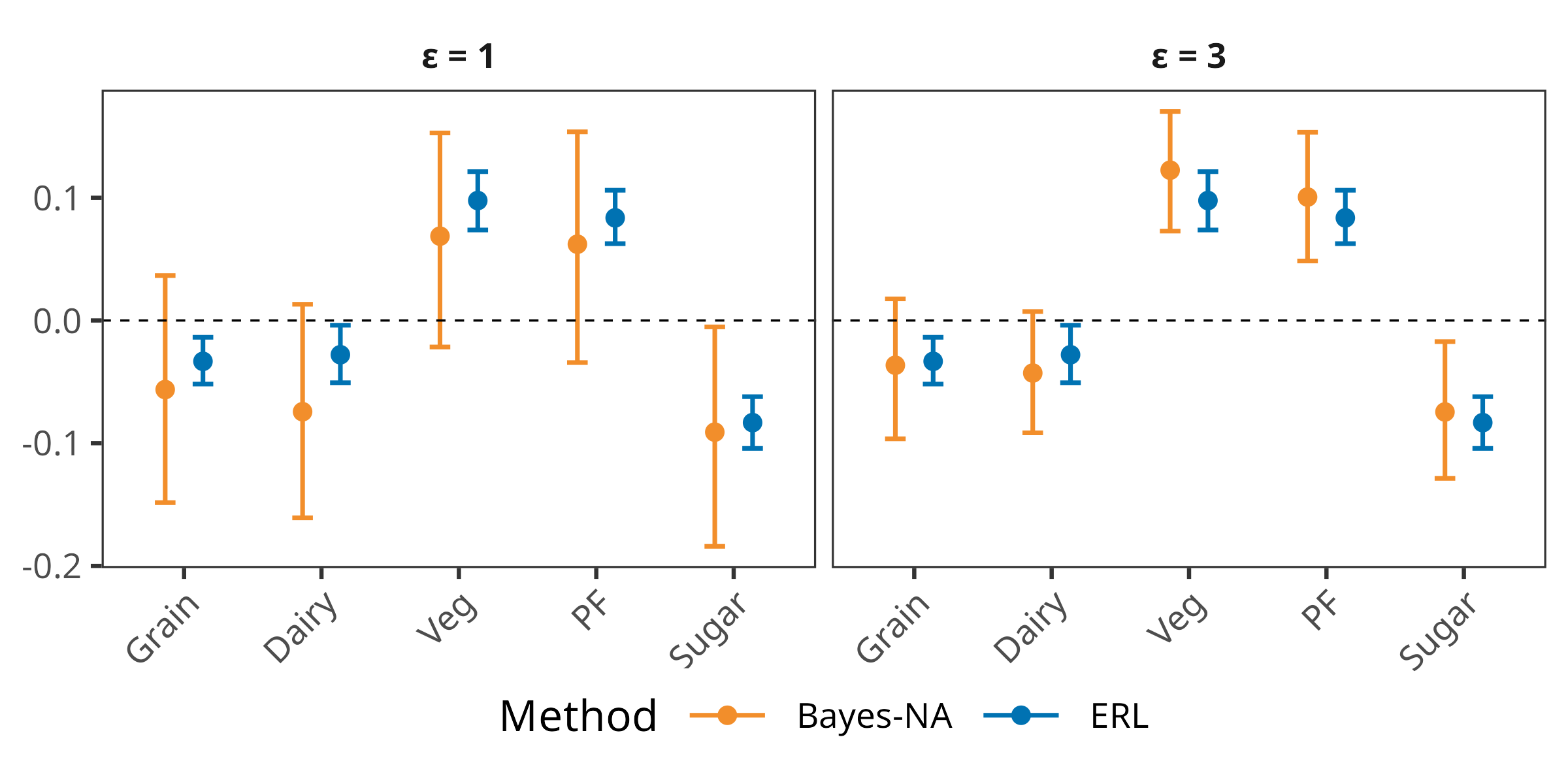}

}

\caption{\label{fig:coef_cond}Posterior summaries of regression coefficients $\beta_i$ for NHANES and FPED data. Within each outcome, the orange interval on the left is the Bayes-NA private estimate and the blue interval on the right is the non-private ERL estimate. Dot: posterior mean; bars: 2.5\% and 97.5\% posterior quantiles.}

\end{figure}

\section{Derivation of the Latent Cell Probabilities}\label{sec:orthant}

This section provides a self-contained derivation of the cell probabilities $p_{uv}(R_{jj'})$, $u,v\in\{0,1\}$, that appear in (7) and (8) of Section~3.1 of the main text.

\begin{lemma}\label{lem:orthant}
Let $(Z_j, Z_{j'})$ be standard bivariate normal with correlation $R_{jj'}\in(-1,1)$. Then
\begin{align}
p_{11}(R_{jj'}) &:= \Pr(Z_j \geq 0,\ Z_{j'} \geq 0)
   = \frac{1}{4} + \frac{1}{2\pi}\arcsin(R_{jj'}), \label{eq:orthant1}\\
p_{00}(R_{jj'}) &:= \Pr(Z_j < 0,\ Z_{j'} < 0) \;=\; p_{11}(R_{jj'}),\\
p_{10}(R_{jj'}) &:= \Pr(Z_j \geq 0,\ Z_{j'} < 0) \;=\; p_{01}(R_{jj'}) \;=\; \frac{1}{4} - \frac{1}{2\pi}\arcsin(R_{jj'}).
\end{align}
\end{lemma}

\begin{proof}
Let $U_1, U_2$ be independent $N(0,1)$ random variables, and define
\begin{equation*}
Z_j = U_1, \qquad Z_{j'} = R_{jj'}\,U_1 + \sqrt{1-R_{jj'}^2}\, U_2,
\end{equation*}
so that $(Z_j, Z_{j'})$ is standard bivariate normal with correlation $R_{jj'}$. Switching to polar coordinates $U_1 = \rho\cos\theta$, $U_2 = \rho\sin\theta$, the angle $\theta$ is uniformly distributed on $[0,2\pi)$ and is independent of the radius $\rho\geq 0$. The event $\{Z_j\geq 0,\ Z_{j'}\geq 0\}$ is therefore equivalent to
\begin{equation*}
\cos\theta \geq 0 \quad\text{and}\quad R_{jj'}\cos\theta + \sqrt{1-R_{jj'}^2}\,\sin\theta \geq 0.
\end{equation*}

Define $\psi := \arcsin(R_{jj'}) \in (-\pi/2, \pi/2)$, so that $\sin\psi = R_{jj'}$ and $\cos\psi = \sqrt{1 - R_{jj'}^2}$. Using the identity $\sin(\theta + \psi) = \sin\theta\cos\psi + \cos\theta\sin\psi$, the second inequality becomes
\begin{equation*}
R_{jj'}\cos\theta + \sqrt{1-R_{jj'}^2}\,\sin\theta = \sin\psi\,\cos\theta + \cos\psi\,\sin\theta = \sin(\theta + \psi) \geq 0.
\end{equation*}

Thus $Z_j\geq 0, Z_{j'}\geq 0\Longleftrightarrow\theta\in\left[-\psi,\ \tfrac{\pi}{2}\right]$, which has Lebesgue measure
\begin{equation*}
\tfrac{\pi}{2} - (-\psi) = \tfrac{\pi}{2} + \arcsin(R_{jj'}).
\end{equation*}

Since $\theta$ is uniformly distributed on $[0, 2\pi)$,
\begin{equation*}
\Pr(Z_j \geq 0,\ Z_{j'} \geq 0) = \frac{1}{2\pi}\!\left[\tfrac{\pi}{2} + \arcsin(R_{jj'})\right] = \tfrac{1}{4} + \tfrac{1}{2\pi}\arcsin(R_{jj'}),
\end{equation*}
which is \eqref{eq:orthant1}.

The remaining identities follow from the symmetry of the centered bivariate normal density. Because $(Z_j, Z_{j'}) \stackrel{d}{=} (-Z_j, -Z_{j'})$, we have $p_{00}(R_{jj'}) = p_{11}(R_{jj'})$ and $p_{10}(R_{jj'}) = p_{01}(R_{jj'})$. Combined with the constraint $\sum_{u,v\in\{0,1\}} p_{uv}(R_{jj'}) = 1$, this gives $p_{01}(R_{jj'}) = 1/2 - p_{11}(R_{jj'}) = 1/4 - \arcsin(R_{jj'})/(2\pi)$.
\end{proof}